\newcolumntype{b}{>{\hsize=1.2\hsize}X}
\newcolumntype{s}{>{\hsize=.45\hsize}X}
\newcolumntype{m}{>{\hsize=.8\hsize}X}
\tikzset{node distance = 1cm and 1.5cm}
\tikzset{>={Latex[width=2mm,length=2mm]}} 
\newcommand{\todo}[1]{}
\renewcommand{\todo}[1]{{\color{Red} TODO: {#1}}}
\newcommand\independent{\protect\mathpalette{\protect\independenT}{\perp}}
\def\independenT#1#2{\mathrel{\rlap{$#1#2$}\mkern2mu{#1#2}}}
\let\E\relax
\DeclareMathOperator{\E}{\mathbb{E}}
\let\P\relax
\DeclareMathOperator{\P}{\mathbb{P}}
\newcommand{\approptoinn}[2]{\mathrel{\vcenter{
  \offinterlineskip\halign{\hfil$##$\cr
    #1\propto\cr\noalign{\kern2pt}#1\sim\cr\noalign{\kern-2pt}}}}}
\newcommand{\appropto}{\mathpalette\approptoinn\relax}
\theoremstyle{theorem}
\crefname{theorem}{Theorem}{Theorems}
\newtheorem{proposition}{Proposition}
\crefname{proposition}{Proposition}{Propositions}
\newtheorem{lemma}{Lemma}
\crefname{lemma}{Lemma}{Lemmas}
\crefname{corollary}{Corollary}{Corollarys}
\theoremstyle{definition}
\crefname{example}{Example}{Examples}
\crefname{exercise}{Exercise}{Exercises}
\crefname{definition}{Definition}{Definitions}
\newtheorem{assumption}{Assumption}
\crefname{assumption}{Assumption}{Assumptions}
\theoremstyle{remark}
\crefname{remark}{Remark}{Remarks}
\newcommand{\rev}[1]{{\color{Black} #1}}
\newcommand{\revv}[1]{{\color{Black} #1}}
\newcommand{\aoasins}[1]{#1}
\newcommand{\aoasdel}[1]{}
\begin{document}

\begin{frontmatter}
\title{BETS: The dangers of selection bias in early analyses of the
  coronavirus disease (COVID-19) pandemic\thanksref{T1}}
\runtitle{BETS in COVID-19}
\thankstext{T1}{Accepted manuscript.}

\begin{aug}

\author[A]{\fnms{Qingyuan} \snm{Zhao}\ead[label=e1]{qyzhao@statslab.cam.ac.uk}},
\author[B]{\fnms{Nianqiao} \snm{Ju}\ead[label=e2,mark]{nju@g.harvard.edu}},
\author[A]{\fnms{Sergio}
  \snm{Bacallado}\ead[label=e3,mark]{sb2116@cam.ac.uk}}
\and
\author[A]{\fnms{Rajen D.}
  \snm{Shah}\ead[label=e4,mark]{R.Shah@statslab.cam.ac.uk}}

\address[A]{Statistical Laboratory,
  Department of Pure Mathematics and
  Mathematical Statistics,
  University of Cambridge,
  \printead{e1,e3,e4}}
\address[B]{Department of Statistics,
  Harvard University,
  \printead{e2}}

\end{aug}

\begin{abstract}
  ~The coronavirus disease 2019 (COVID-19) has quickly grown from a
  regional outbreak in Wuhan, China to a global pandemic. Early
  estimates of the epidemic growth and incubation period of COVID-19
  may have been biased due to sample selection. Using
  detailed case reports from 14 locations in and outside mainland China,
  we obtained 378 Wuhan-exported cases who left Wuhan before an abrupt
  travel quarantine. We developed a generative model we call BETS for four key
  epidemiological events---Beginning of exposure, End of exposure,
  time of Transmission, and time of Symptom onset (BETS)---and derived
  explicit formulas to correct for the sample selection. We gave a
  detailed illustration of why some early and highly influential
  analyses of the COVID-19 pandemic were severely biased. All our
  analyses, regardless of which subsample and model were being used,
  point to an epidemic doubling time of $2$ to $2.5$ days during the
  early outbreak in Wuhan. A Bayesian nonparametric analysis further
  suggests that about 5\% of the symptomatic cases may not develop symptoms
  within 14 days of infection and that men may be much more likely
  than women to develop symptoms within 2 days of infection.
\end{abstract}

\begin{keyword}
\kwd{epidemiology}
\kwd{infectious disease}
\kwd{Bayesian nonparametrics}
\kwd{selection bias}
\kwd{epidemic growth}
\kwd{incubation period}
\end{keyword}

\end{frontmatter}





\section{Introduction}
\label{sec:introduction}

On December 31, 2019, the Health Commission in Wuhan, China, announced 27 cases of
unknown viral pneumonia and alerted the World Health Organization. The causative pathogen
was quickly identified as a novel coronavirus and the disease was later designated as the
coronavirus disease 2019 (COVID-19) \citep{who2020statement}. The
regional outbreak in Wuhan quickly turned into a global pandemic. As
of June 1, 2020, COVID-19 has reached almost every country in the
world, infected more than 6 million people, and killed at least 370,000
\citep{jhucovid19}.

Researchers around the world quickly responded to the COVID-19
outbreak. In particular, many have examined early outbreak data to
estimate the initial epidemic growth, using COVID-19 cases confirmed
in Wuhan or elsewhere. Two early studies published in premier medical
journals by the end of January estimated that the epidemic doubling
time in Wuhan was about $6$ to $7$ days
\citep{li2020early,wu2020nowcasting}, but other studies appearing around the
same time found that the doubling time was drastically shorter, about $2$ to
$3$ days \citep{read2020novel,sanche2020high,zhao2020analysis}. How the
pandemic subsequently developed around the world seems to suggest that
the latter estimates were much closer to truth. By simply plotting the
cumulative cases and deaths over time, it is evident now that the
number of cases (and deaths) grew more than
100 times 20 days after the first 100 cases (and 10 deaths) in
countries most heavily hit by the pandemic such as Italy, Spain,
and the United States (\Cref{fig:growth}). That growth rate almost exactly
corresponds to a doubling time of 3 days. Nevertheless, to our
knowledge there is no formal explanation for this drastic difference, and
it might have caused confusion during the early phase of
containment of COVID-19. For example, during the UK government's daily
briefing on March 16, it was acknowledged that ``without drastic
action, cases could double every five or six days''
\citep{ukmarch16}. Less than two weeks later, that number was revised
to ``three to four days'' \citep{govemarch27}.

For infectious diseases, another key epidemiological parameter is the
incubation period. Several studies have attempted to estimate the
incubation period distribution of COVID-19 using cases exported from
Wuhan
\citep{backer2020incubation,lauer2020incubation,linton2020incubation}
and the results have been influential in shaping guidelines to manage
confirmed COVID-19 patients. For example, the interim clinical
guidance for managing COVID-19 patients published by the Centers for
Disease Control and Prevention (CDC) \citep{cdc2020guidance} quoted
the results of \citet{lauer2020incubation} that ``97.5\%
of persons with COVID-19 who develop symptoms will do so within 11.5
days of SARS-CoV-2 infection.''  However, as we will demonstrate below in
\Cref{sec:why-prev-analys}, the design and statistical inference of
these studies are highly susceptible to selection bias.

\begin{figure}[t]
  \centering
  \begin{subfigure}[b]{0.49\textwidth}
    \includegraphics[width=\textwidth]{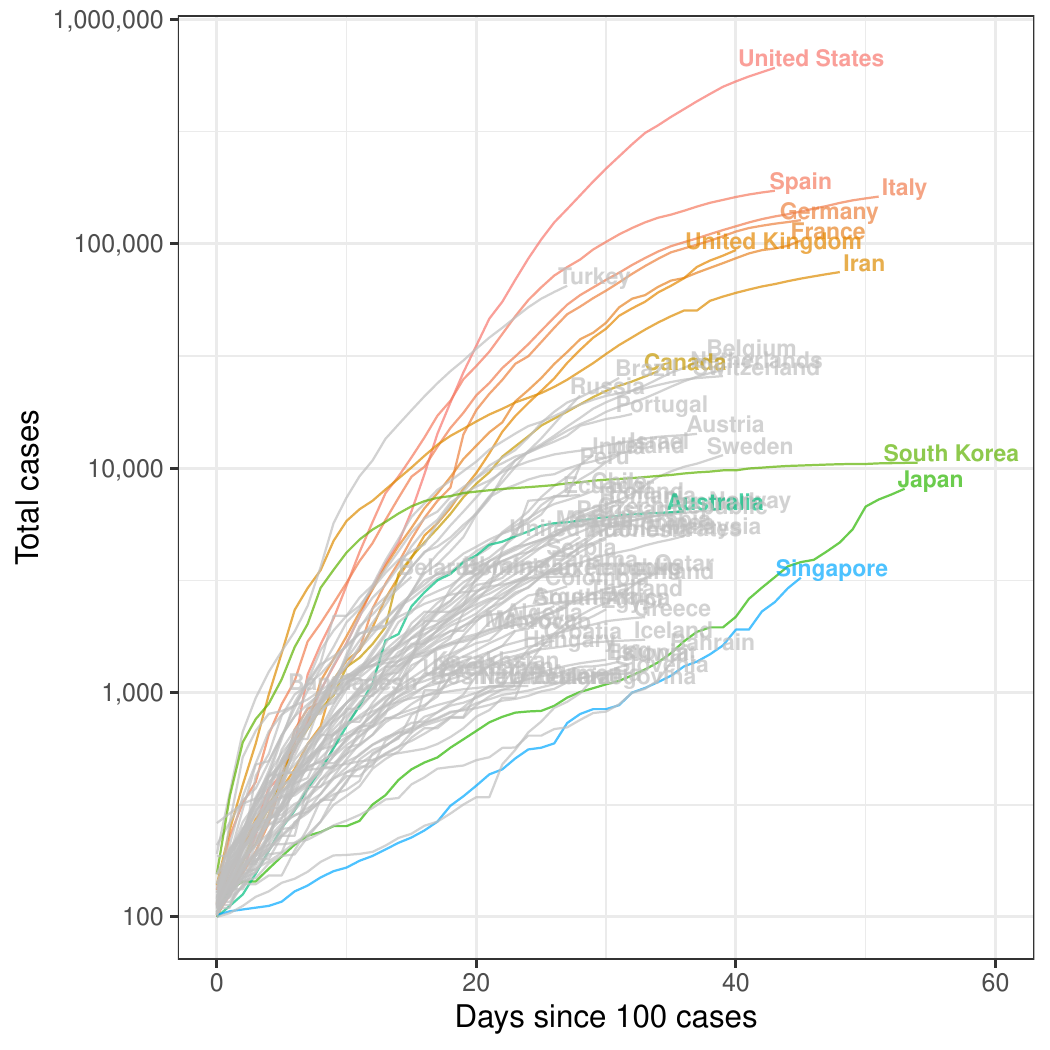}
    \caption{Cumulative cases.}
  \end{subfigure}
  \begin{subfigure}[b]{0.49\textwidth}
    \includegraphics[width=\textwidth]{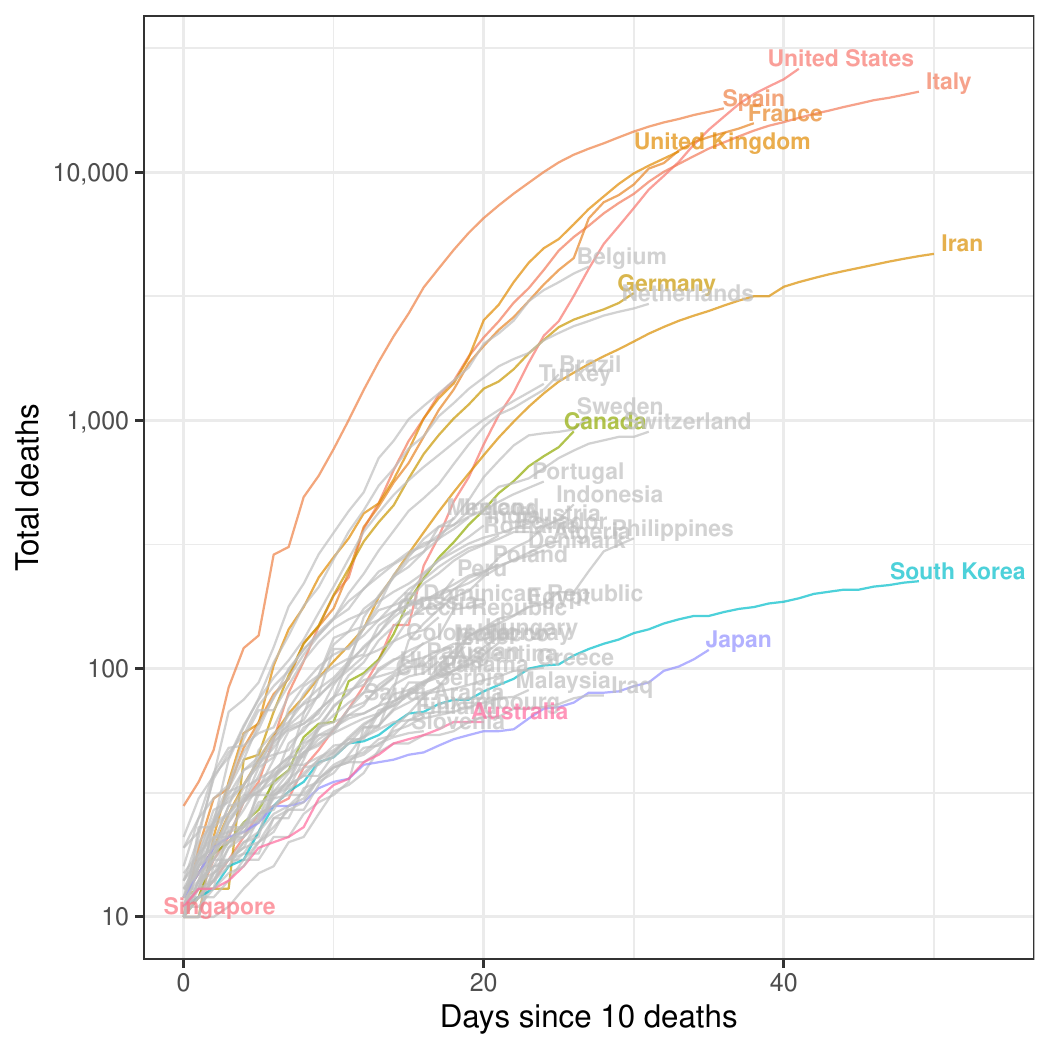}
    \caption{Cumulative deaths.}
  \end{subfigure}
  \caption{Growth of the COVID-19 pandemic around the world (data
    retrieved from \url{https://www.worldometers.info/} on April 15, 2020).}
  \label{fig:growth}
\end{figure}

Accurate estimation of these epidemiological parameters are crucial to the
global efforts to contain and mitigate the COVID-19 pandemic. Besides
direct implications in patient management and quarantine policies,
they are also inputs to the compartmental models
\citep{kermack1927contribution,hethcote2000mathematics} and
individual-based models \citep{willem2017lessons} to forecast the
epidemic and inform public health policies. In general, there are
several potential sources of bias in early analyses of the COVID-19
pandemic \revv{(see also \Cref{tab:summary-bias})}:
\begin{enumerate}
\item {\bfseries Under-ascertainment:} Because COVID-19 is a new
  disease, the testing capacity was very limited during the early
  stage of the outbreak. The eligibility criterion for testing was
  initially very strict. This may explain why \citet{li2020early}
  under-estimated the epidemic growth as they only used cases in Wuhan who
  showed symptoms before January 5, 2020.
\item {\bfseries Non-random sample selection:} \revv{Not all public health
    agencies reported detailed information of COVID-19 cases. Many
    stopped doing so after the first few cases. Studies which only
    collect complete or conveniently available data may be biased by
    non-random sample selection.} For example, it is
  often impossible to know the exact time when the cases were
  infected. If one simply uses cases with known infection time, the
  incubation period may be under-estimated because it is more
  difficult to discern the infection time for cases with longer
  incubation period.
\item {\bfseries Travel quarantine:} Wuhan is a major transportation
  hub in central China. To control the spread of the virus, all
  outbound travels from Wuhan were abruptly halted on January 23,
  2020. For studies using cases exported from Wuhan \revv{(COVID-19
    cases who were infected in Wuhan and confirmed elsewhere)},
  ignoring the sample selection due to the travel quarantine leads to
  biased estimates of the epidemic growth.
  \item \revv{ {\bfseries Ignoring epidemic growth:} Because the epidemic was
    rapidly growing, patients were more likely to be infected towards
    the end of their exposure period. Ignoring the growth and using a simple
    uniform distribution for the infection time over a prolonged exposure
    period may lead to over-estimation of the incubation period.}
\item \revv{ {\bfseries Right-truncation:} Early analyses of the
    epidemic were limited to using cases confirmed before a certain
    date, when the number of infections was still growing rapidly. This
    may lead to under-estimation of
    the incubation period, as people with milder symptoms or longer
    incubation period are less likely to be included in the study.}
\end{enumerate}

In this article, we address these challenges by carefully constructing a
study sample and a statistical model. We collected key epidemiological
information for 1,460 confirmed COVID-19 cases across 14 locations in
and outside mainland China. By focusing on locations where the
local health agencies made great efforts to contain the initial
outbreaks and published detailed case reports, the biases due to (i)
under-ascertainment and (ii) non-random selection are
minimized. \Cref{sec:data} describes how our data were collected and
the Wuhan-exported cases were discerned.

\revv{
We addressed potential biases due to (iii) the travel quarantine, (iv)
ignoring epidemic growth, and (v) right-truncation by constructing a
generative statistical model. We call it
the BETS model, as it models four key epidemiological events:
Beginning of exposure, End of exposure, time of Transmission, and time of
Symptom onset. The travel quarantine puts a constraint on the support
of the observed data for Wuhan-exported cases, for which we carefully
worked out the selection
probability and used it to adjust the likelihood function. Epidemic
growth is naturally considered in the estimation of the incubation
period because they are estimated jointly using the likelihood functions
we derived. Sample selection due to right-truncation can also be
characterized and adjusted for. Details of the generative model and
likelihood inference can be found in \Cref{sec:model}.
}

We then give a detailed explanation in \Cref{sec:why-prev-analys} of why
some early analyses of the
COVID-19 outbreak were severely biased, including the estimation of epidemic growth
by \citet{wu2020nowcasting} and the estimation of incubation period by
\citet{backer2020incubation,lauer2020incubation,linton2020incubation}. Because
these analyses did not start from a generative model, they could not
correctly adjust for sample selection in their statistical
inference.

In order to obtain closed-form likelihood functions in
\Cref{sec:model}, we introduced some parametric assumptions
which necessarily restrict the shape of the tail of the incubation
period distribution. To avoid
biased tail estimates, we model the distribution nonparametrically and
also relax the other assumptions in
\Cref{sec:bayes}. Because the likelihood function is no longer
available in closed form, a Markov Chain Monte Carlo (MCMC) sampler is
needed for Bayesian nonparametric inference. Finally, we summarize our
findings and discuss potential limitations of our study in
\Cref{sec:discussion}. All technical derivations can be found in the
Online Supplement; our dataset and statistical programs are publicly available
as an \texttt{R} package from
\url{https://github.com/qingyuanzhao/bets.covid19}.

  \begin{landscape} \centering

    \begin{table}[t] 
      \centering
      \caption{Summary of potential biases in analyses of the
        COVID-19 pandemic.}
      \label{tab:summary-bias}
      \begin{tabularx}{\linewidth}{b|mmb}
        \toprule
        \textbf{Bias} & \textbf{Susceptible studies} & \textbf{Direction} & \textbf{Solutions} \\
        \midrule
        \textbf{(i) Under-ascertainment:} Symptomatic patients
        did not seek healthcare or could not be diagnosed. & All studies
        using cases confirmed when testing is insufficient. &
        \textbf{Varied,} depending on the pattern of
        under-ascertainment and parameter of interest. & Use
        carefully considered and planned study designs. \\
        \midrule
        \textbf{(ii) Non-random sample selection}: Cases included in
        the study are not representative of the population. &
        All studies, as detailed information of COVID-19 cases is
        sparse, but especially those without clear inclusion
        criteria. &
        \textbf{Varied.} & Follow a protocol for data collection
        \aoasdel{exclude data that do not meet the sample
          inclusion criterion.}\aoasins{with
          a clearly defined sample inclusion criterion.}
        \\
        \midrule
        \textbf{(iii) Travel quarantine:} Outbound
        travel from Wuhan was banned from January 23, 2020 to April
        8, 2020. & Studies that analyze cases exported from Wuhan. &
        \textbf{Under-estimation} of epidemic growth \citep{wu2020nowcasting} and
        infection-to-recovery time \citep{dorigatti2020report}. & Derive
        tailored likelihood functions to account for travel
        restrictions. (See \Cref{sec:estim-epid-growth}.) \\
        \midrule
        \textbf{(iv) Epidemic growth:} Patients were more
        likely to be infected towards the end of their exposure
        period. & Studies that treat
        infections as uniformly distributed over the exposure period.
        & \textbf{Over-estimation} of incubation period
        \citep{backer2020incubation,lauer2020incubation,linton2020incubation}
        and serial interval
        \citep{du2020serial,nishiura2020serial}. & Derive
        tailored likelihood functions to account for epidemic
        growth. (See \Cref{sec:estim-incub-peri}.) \\
        \midrule
        \textbf{(v) Right-truncation:} Cases confirmed after a certain time are
        excluded from the dataset. & Studies that only use cases detected
        early in an epidemic. & \textbf{Under-estimation} of incubation
        period
        \citep{backer2020incubation,lauer2020incubation,linton2020incubation},
        serial interval \citep{du2020serial,nishiura2020serial}, and disease severity. & \textbf{1.\ }Collect all
        cases that meet a selection criterion, do not end data collection
        prematurely; \textbf{2.\ }Derive tailored likelihood functions to
        correct for right-truncation. (See
        \Cref{sec:estim-incub-peri}.) \\
        \bottomrule
      \end{tabularx}
    \end{table}
  \end{landscape}



\section{Data}
\label{sec:data}

\subsection{Data Collection}
\label{sec:data-collection}

\begin{figure}[t]
  \centering
  \includegraphics[width = 0.7\textwidth]{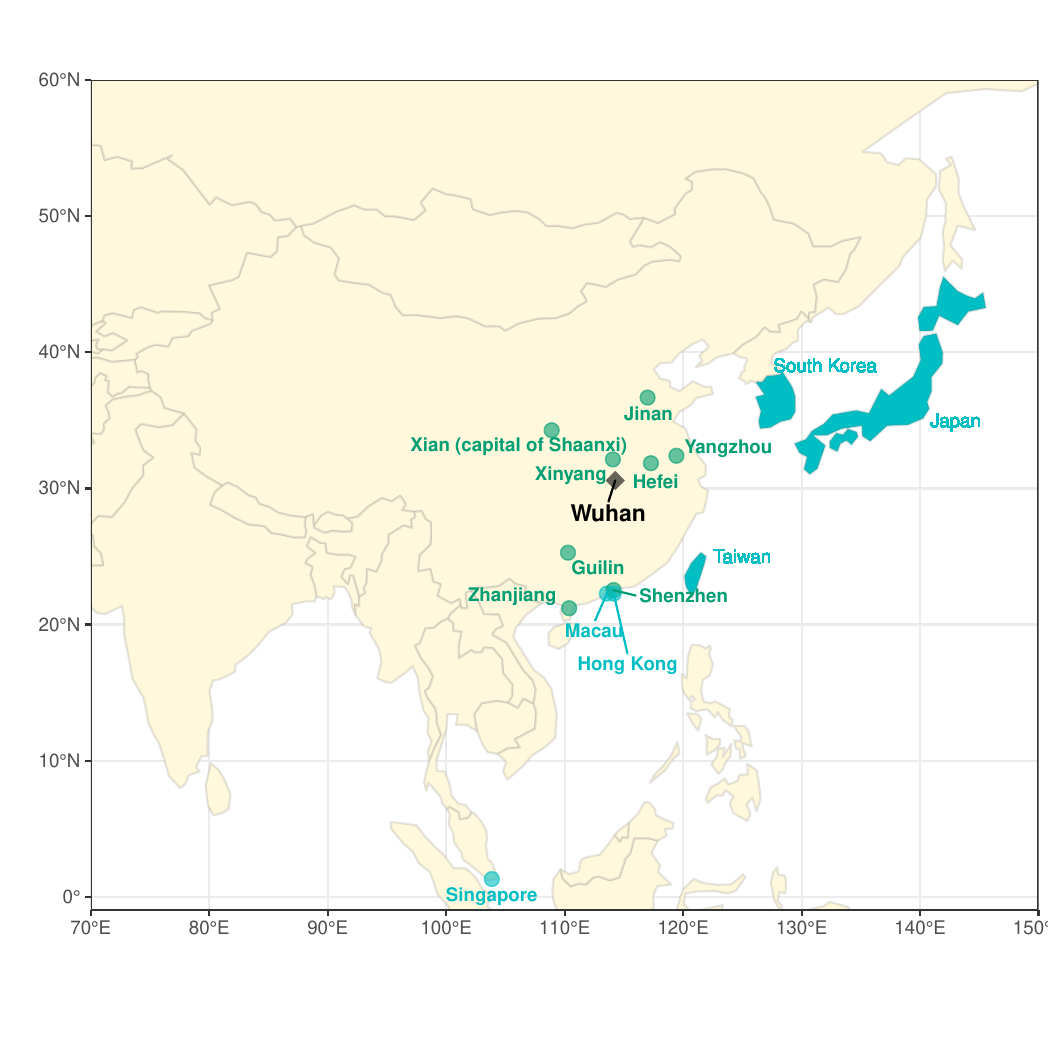}
  \caption{Geographical locations of the confirmed cases in our dataset.}
  \label{fig:map}
\end{figure}

We identified 14 locations where the local health agencies have published
continuous reports for every confirmed COVID-19 case since the first
local case. Out of the 14 locations, 8 are cities/provinces in
mainland China: Hefei, Guilin, Jinan, Shaanxi, Shenzhen, Yangzhou,
Xinyang, Zhanjiang and 6 are countries/regions in East Asia: Hong Kong, Japan,
South Korea, Macau, Singapore, and Taiwan (\Cref{fig:map}). These
locations have varied levels of economic development and patterns of
traveling to/from Wuhan. Key information (close contact, travel history,
symptom onset) of the confirmed COVID-19 cases was collected based
on press releases of the official health agencies (\Cref{tab:data}). In total,
there are 1,460 COVID-19 cases in the collected dataset.

\begin{table}[t] \scriptsize
  \centering
    \caption{A summary of the key columns in the collected
    dataset. \revv{Boxed entries correspond to the recorded values of the example
      (HongKong-05).} 
    {
    $^1$Description of this case in Hong Kong government's press
    release on January 24, 2020: ``The other two cases are a married couple of
    residents of in Wuhan, a 62-year-old female [HongKong-04] and a
    63-year-old male [HongKong-05], with good prior health conditions. Based
    on information provided by the patients, They took a high-speed train
    departing from Wuhan at 2:20pm, January 22, and arrived at the
    West Kowloon station around 8pm. The female patient had a fever
    since yesterday with no respiratory symptoms. The male patient
    started to cough yesterday and had a fever today. They went to the
    emergency department at the Prince of Wales Hospital yesterday and
    were admitted to the hospital for treatment in isolation. Currently
    their health conditions are stable. Respiratory samples of the two
    patients were tested positive for the novel coronavirus.''
    (translated from Chinese). 
    $^2$A case is considered to have known epidemiological contact if he/she
    had contact with people from the Hubei province or had contact
    with another case confirmed earlier.
    $^3$See the main text for
    the criterion we used to classify the cases. 
    $^4$The beginning
    of stay is treated as November 30 if the case resides in Wuhan
    and has no known beginning of stay.}}
  \label{tab:data}

  \begin{tabular}{l|l|c|c}
    \toprule
    \textbf{Column name} & \textbf{Description} & \textbf{Example$^1$} & \textbf{Summary statistics} \\
    \midrule
    \texttt{Case} & Unique identifier for each case &
                                                            HongKong-05
                               & 1460 in total \\
    \texttt{Residence} & Nationality or residence of the case & Wuhan
                                                 & 21.5\% reside in Wuhan  \\
    \texttt{Gender} & Gender & \fbox{Male}/Female & 52.1\%/47.7\%
                                                    (0.2\% unknown) \\
    \texttt{Age} & Age & 63 & Mean=45.6, IQR=[34, 57] \\
    \midrule
    \texttt{Known Contact} & Have known epidemiological contact$^2$? &
                                                                       \fbox{Yes}/No & 84.7\%/15.3\% \\
    \texttt{Cluster} & Relationship with other cases & Husband of &
                                                                    32.1\%
                                                                    known \\
    & & HongKong-04 & \\
    \texttt{Outside} & Transmitted outside Wuhan?$^3$ &
                                                        Yes/\fbox{Likely}/No
                                                 & 58.5\%/7.7\%/33.8\%\\
    \midrule
    \texttt{Begin Wuhan} & Begin of stay in Wuhan ($B$) & 30-Nov$^4$ &  \\
    \texttt{End Wuhan} & End of stay in Wuhan ($E$) & 22-Jan &  \\
    \texttt{Exposure} & Period of exposure & 1-Dec to 22-Jan & 58.9\%
                                                               known
                                                               period/date
    \\
                         & & & 8.2\% known date \\
    \midrule
    \texttt{Arrived} & Final arrival date at the location
                               & 22-Jan & 40.6\% did not travel outside \\
    & where confirmed a COVID-19 case & & \\
    \texttt{Symptom} & Date of symptom onset ($S$) & 23-Jan & 9.0\% unknown \\
    \texttt{Initial} & Date of first medical visit/quarantine & 23-Jan
                                                 & 6.5\% unknown  \\
    \texttt{Confirmed} & Date confirmed as a COVID-19 case & 24-Jan &  \\
    \bottomrule
  \end{tabular}

\end{table}

\begin{figure}[t]
  \centering
  \includegraphics[width = \textwidth]{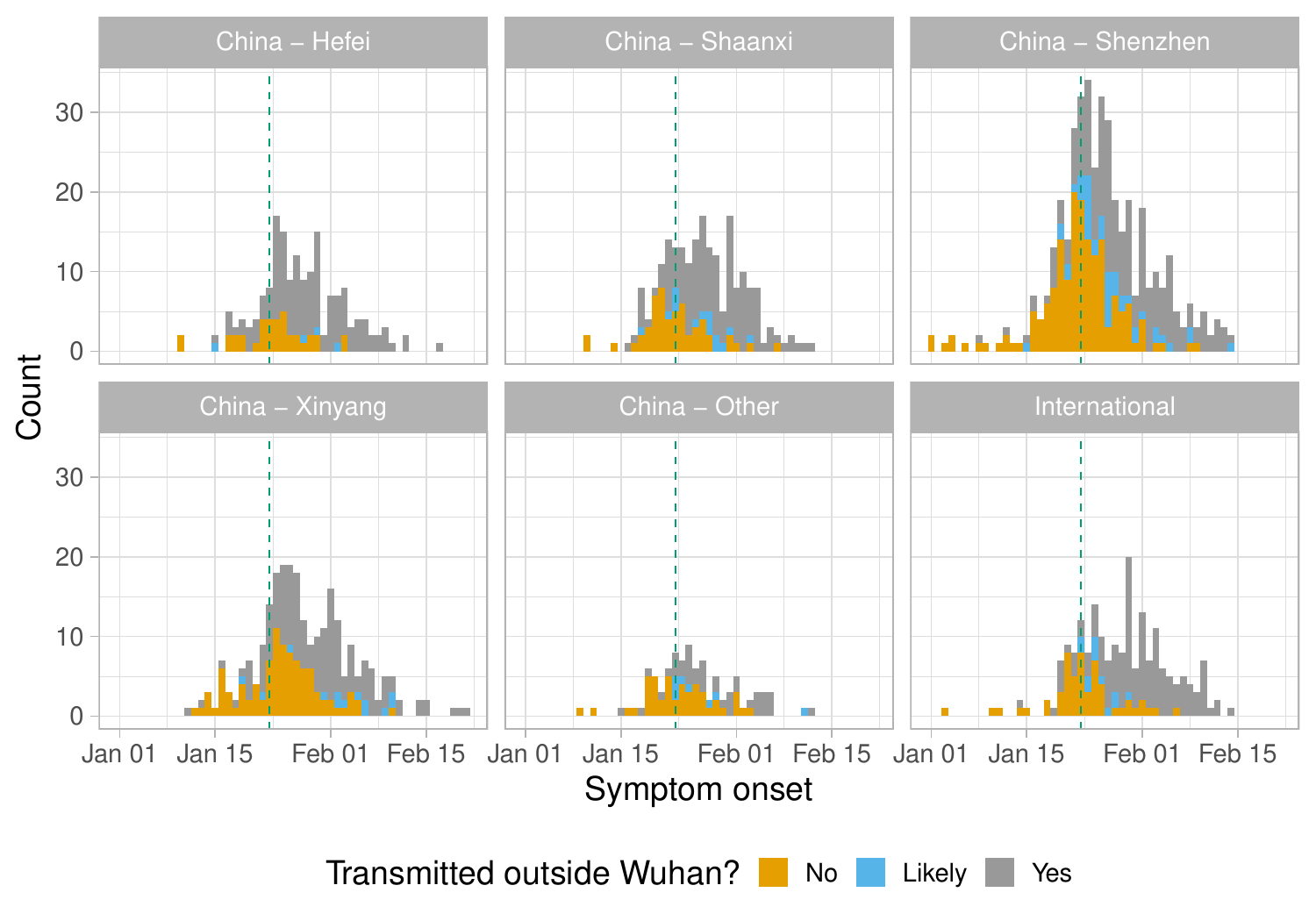}
  \caption{Epidemic curves in different locations stratified by
    whether the cases were transmitted outside Wuhan. ``China - Other''
    includes four Chinese cities: Guilin, Jinan, Yangzhou, Zhanjiang;
  ``International'' includes six Asian countries/regions: Hong Kong,
  Japan, Korea, Macau, Singapore, Taiwan. The dashed vertical lines
  correspond to the abrupt travel quarantine of Wuhan from January 23,
  2020.}
  \label{fig:epidemic-curve}
\end{figure}

For the mainland Chinese locations, the dataset included all the cases
confirmed as of February 29, 2020. In Chinese cities outside the Hubei
province, local epidemics were considered to be successfully contained
by the end of February. For the international locations, the dataset
included all the cases confirmed before February 15, more than three
weeks after the outbound travel quarantine of Wuhan on January
23. It is thus safe to say that our dataset contains
almost all Wuhan-exported cases confirmed in these locations.

\subsection{Discerning Wuhan-exported cases}
\label{sec:wuhan-exported-cases}

\revv{We define Wuhan-exported cases as those who were infected in
  Wuhan and confirmed elsewhere.}
In total, $614$ cases in our dataset are potentially exported from
Wuhan because they had stayed in Wuhan before got diagnosed
elsewhere. Because Wuhan was the first center of epidemic
outbreak and traveling from/to Wuhan was not restricted before
January 23, it is reasonable to assume that most of these
$614$ cases were infected there. However, some uncertainty arises
if a case had contact with other confirmed cases outside their stay in
Wuhan, in one of the following scenarios:
\begin{itemize}
\item The case already had contact with other confirmed cases before
  their stay in Wuhan ($4$ cases);
\item The case had contact with other confirmed cases only after they
  left Wuhan but before they arrived at their destination, for example
  in trains or flights ($4$ cases);
\item The case had close contact with other confirmed cases (usually
  family members \revv{who traveled together} from Wuhan) after they reached
  their travel destination ($131$ cases).
\end{itemize}

\revv{To discern a dataset of Wuhan-exported cases, the principle we
followed is to exclude cases if there is a ``reasonable doubt'' that they
could be infected outside Wuhan.} We assumed in the first two scenarios
above, the cases were transmitted outside Wuhan. For the third
scenario, it is likely that the cases were
transmitted outside Wuhan, but at least one of the cases in each cluster
were transmitted in Wuhan. (Two cases are considered to belong to the
same cluster if they are in the family or had other recorded contact.)
We used a column called \texttt{Outside} in our dataset to record our
best judgment on whether the cases were transmitted outside Wuhan
using the following rules:
\begin{enumerate}
\item \texttt{Outside} = ``Yes'': Cases with no recorded stay in Wuhan between
  December 1, 2019 and January 23, 2020, and the $8$ cases in the
  first two scenarios above ($854$ cases).
\item \texttt{Outside} = ``Likely'': Wuhan-exposed cases who did not
  show symptoms during the recorded stay in Wuhan and had recorded
  contact with another confirmed COVID-19 case with an earlier symptom
  onset ($112$ cases).
\item \texttt{Outside} = ``No'': Wuhan-exposed cases who had no recorded
  contact with other confirmed cases, or had the earliest symptom
  onset in their cluster or showed symptoms during
  their stay in Wuhan ($494$ cases).
\end{enumerate}
\Cref{fig:epidemic-curve} shows the local epidemic curves stratified
by the \texttt{Outside} column in different locations.

The dataset we collected has relatively few missing values in the key
entries needed for epidemic modeling. Among the \texttt{Outside} =
``No'' cases, only 6.5\% do not have the exact date they left Wuhan
and only 8.1\% have missing symptom onset date (including
those showing no symptoms at the time of confirmation). \revv{We
  imputed the missing end of exposure to Wuhan by the day of the
  travel quarantine (January 23) and excluded the cases with missing
  symptom onset. This left us with 458 cases. We further excluded
  cases who arrived at the location where they are diagnosed with
  COVID-19 after January 23 as they have a different traveling pattern
than the other cases. In the end we obtained 378 cases who were
exported from Wuhan.}


\section{Statistical model and parametric inference}
\label{sec:model}


\subsection{BETS: A generative model}
\label{sec:generative-model}

\revv{On a high level, our goal is to make inference about the epidemic in
Wuhan using its ``shadows'' observed in other locations. To properly
consider the consequences of sample selection,} we will first outline
a generative model for (which is also named after) four key epidemiological events: the
beginning of stay in Wuhan $B$, the end of stay in Wuhan $E$, the
usually unobserved time of transmission $T$, and the time of
symptom onset $S$ (BETS). These four variables are well defined
regardless of whether the person has been to Wuhan, contracted the
pathogenic coronavirus, or showed symptoms of COVID-19. \aoasins{Even
  though the observed times are discrete (in days), we
  will model $B,E,T$, and $S$ as continuous random variables
  (with $\infty$ reserved for special events). This allows us to
  simplify the interpretation of the modeling assumptions and obtain
  closed-form solutions. When applying the likelihood functions below
  on the discretized observations, the results may be interepreted as an
  approximation to the model assuming $B,E$, and $S$ are uniformly
  distributed around the discretized measurements. See
  \Cref{sec:impl-likel-infer} for details of the implementation and
  \Cref{sec:bayes} for how one can directly model the
  discretized times.}

\subsubsection*{Study population: Exposed to Wuhan}
\label{sec:support-variables}

Consider the population of all people who stayed in
Wuhan any time between 12AM December 1, 2019 (time $0$) and 12AM
January 24, 2020 (time $L$ when outbound travel from Wuhan was banned,
$L = 54$) in local time. The choice time $0$, as long as it is
reasonably early, has essentially no effect on our analysis below,
because almost all of the transmissions happened in January. We
introduce the following conventions to define the population with
exposure to Wuhan:

\begin{itemize}
\item $B = 0$: The person started their
  stay in Wuhan before December 2019.
\item $E = \infty$: The person did not arrive in the 14 locations we
  are considering before the travel quarantine (time $L$). \revv{For
    the purpose of this study, we need not differentiate between
    people who stayed in Wuhan or went to a location different from
    the ones we are considering}.
\item $T = \infty$: The person did not contract the pathogenic virus
  during their stay in Wuhan. For the purpose of this study, we
  need not differentiate between people who contracted the virus
  outside their Wuhan stay and people who never contracted the virus.
\item $S = \infty$: The person did not show symptoms of COVID-19,
  either because they never contracted the virus or were
  asymptomatic.
\end{itemize}

Because we are only considering people exposed to Wuhan, we
have $B \le L$. Two other natural constraints are $B \le E$ and $T \le
S$ (where we allow $\infty \le \infty$). Therefore, the support of
$(B,E,T,S)$ for the Wuhan-exposed population is
\begin{equation}
  \label{eq:population}
  \mathcal{P} = \Big\{(b,e,t,s) \mid b \in [0,L], e \in
  [b,L] \cup \{\infty\}, t \in [b,e] \cup \{\infty\}, s \in [t,\infty]\Big\}.
\end{equation}

Notice that although $B$ \aoasdel{is supported on}\aoasins{has support} $[0,L]$, $B=0$ is a point
mass representing Wuhan residents and is categorically different from
$B = \epsilon$ for some small positive $\epsilon$. All density
functions of $B$ below are defined with respect to the sum of Lebesgue
measure on the real line and degenerate counting measure for
$\{0\}$. Similarly, for $E$, $T$, and $S$ the dominating measure is
the sum of Lebesgue measure and the counting measure for
$\{\infty\}$. Joint densities of $(B,E,T,S)$ below are defined with
respect to their product measure.

\subsubsection*{Full data BETS model\revv{: Independence of
    traveling and disease transmission/progression}}

\revv{In \Cref{sec:sample-selection} below we will define the constraints
  corresponding to our sample selection. But first, we will introduce
  a generative statistical model for $(B,E,T,S)$ in the whole study
  population $\mathcal{P}$.} The joint density of $(B,E,T,S)$ can
always be factorized as:

\begin{equation}
  \label{eq:joint-density}
  f(b,e,t,s) = f_B(b) \cdot f_E(e \mid b) \cdot f_T(t \mid
  b, e) \cdot f_S(s \mid b,e,t).
\end{equation}

Throughout this article we will maintain two general assumptions about
two conditional densities in this factorization:

\begin{assumption} \label{assump:model-t}
  The conditional density $f_T(t \mid b,e)$ does not depend on $b$ and
  $e$ in the range $b \le t \le e$, so it can be written as
  \begin{equation}
    \label{eq:model-t}
    f_T(t \mid b,e) =
    \begin{cases}
      g(t), & \text{if}~b < t < e, \\
      1 - \int_b^e g(x)\, dx, & \text{if}~t = \infty.
    \end{cases}
  \end{equation}
  Here $g(t) \ge 0$ models the epidemic growth in Wuhan before the
  citywide quarantine on January 23; it can be interpreted as the
  instantaneous probability of being infected in Wuhan at time $t$ and
  satisfies the constraint $\int_{0}^L g(x) \, dx \le 1$.
\end{assumption}

\begin{assumption} \label{assump:model-s}
  The conditional density $f_S(s \mid b,e,t)$ does not depend on $b$
  and $e$, so it can be written as
  \begin{equation}
    \label{eq:model-s}
    f_S(s \mid b,e,t) = \begin{cases}
      \nu \cdot h(s - t),& \text{if}~s < \infty, \\
      1-\nu, & \text{if}~s = \infty.
    \end{cases}
  \end{equation}
  Here \aoasins{$1-\nu$ is the probability of asymptomatic infection,}
  $h(s-t)$ is the conditional density of the incubation
  period $S - T$ given that $S - T < \infty$ (the case is not
  asymptomatic), so $h(\cdot)$ satisfies $\int_0^{\infty} h(x) \, dx
  = 1$.
\end{assumption}

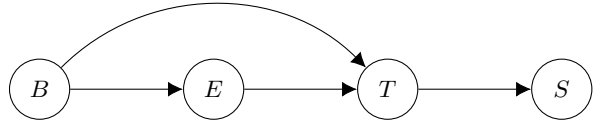
\begin{figure}[t]
  \centering
  \begin{tikzpicture}[minimum size = 0.8cm]
    \node[circle,draw,name=b]{$B$};
    \node[circle,draw,name=e,right= of b]{$E$};
    \node[circle,draw,name=t,right= of e]{$T$};
    \node[circle,draw,name=s,right= of t] {$S$};
    \draw[->] (b) to (e);
    \draw[->,out=45,in=135] (b) to (t);
    \draw[->] (e) to (t);
    \draw[->] (t) to (s);
  \end{tikzpicture}
  \caption{Directed acyclic graph (DAG) for the BETS model. $B$ is the
  beginning of exposure, $E$ is the end of exposure, $T$ is the time
  of transmission, and $S$ is the time of symptom onset.}
  \label{fig:bets-dag}
\end{figure}

\Cref{assump:model-t,assump:model-s} essentially mean that the
disease transmission and progression are independent of
traveling, which allows us to extend conclusions learned from the
Wuhan-exported sample to the whole population. \Cref{assump:model-s}
is equivalent to the conditional independence $S \independent (B,E)
\mid T$, which can be represented as a directed acyclic graphical
(DAG) model (\Cref{fig:bets-dag}) on
the distribution of $(B,E,T,S)$
\citep{lauritzen1996graphical}. \Cref{assump:model-t} further
restricts the dependence of $T$ on $(B,E)$. Under these two
assumptions, the BETS model is then
parameterized by two kinds of parameters: \revv{the nuisance} parameters for the traveling
pattern $f_B(\cdot)$ and $f_E(\cdot \mid \cdot)$, and \revv{the}
parameters \revv{of interest} for disease transmission $g(\cdot)$ and
progression $h(\cdot)$.

Like any other assumptions in epidemic models,
\Cref{assump:model-t,assump:model-s} represent approximations to the
underlying
dynamics. \Cref{assump:model-t,assump:model-s} can be violated if, for
example, short-term visitors were exposed to more
infectious cases or if people were less likely to travel if they felt
sick. Nevertheless, we think they are reasonable approximations to
the reality during the initial outbreak, when little was known about
the new infectious disease.

\subsubsection*{Parametric assumptions \revv{for closed-form
    likelihood functions}}

\Cref{assump:model-t,assump:model-s} are general assumptions
on the dependence of $T$ and $S$ on $B$ and $E$. We consider two
parametric assumptions that simplify the interpretation of our
results:
\begin{assumption} \label{assump:model-t-parametric}
  The probability of contracting the virus in Wuhan was increasing
  exponentially before the quarantine:
  \begin{equation}
    \label{eq:model-t-parametric}
    g(t) = g_{\kappa,r}(t) \overset{\Delta}{=} \kappa \cdot \exp(rt),~t \le L,
  \end{equation}
  where \aoasdel{$(\kappa,r)$}\aoasins{the baseline attack rate $\kappa$
    and the growth exponent $r$} satisfies $\int_0^L g_{\kappa,r}(t)
  \, dt \le 1$.
\end{assumption}

\begin{assumption}  \label{assump:model-s-parametric}
  The incubation period $T-S$, given that it is finite (the case is
  not asymptomatic), follows a Gamma distribution with shape
  $\alpha > 0$ and rate $\beta > 0$:
  \begin{equation}
    \label{eq:model-s-parametric}
    h(s - t) = h_{\alpha,\beta}(s-t) \overset{\Delta}{=} \frac{\beta^{\alpha}}{\Gamma(\alpha)} (s-t)^{\alpha-1}
    \exp\{-\beta (s-t)\}.
  \end{equation}
\end{assumption}

\Cref{assump:model-t-parametric} says that the epidemic size in
Wuhan was growing exponentially before the quarantine, which is a
common assumption for early epidemic outbreaks. We think it is quite
reasonable given that little was known about the novel coronavirus
before January 23. \aoasins{One of the main objectives in early
  epidemic analysis is to estimate the growth exponent $r$, which can be
  translated to the doubling time by finding the $t$ such that
  $\exp(rt) = 2$ (i.e.\ $t = \log(2)/r$). The
  parameter $\kappa$ is important for measuring the absolute
  magnitude of the epidemic. It is possible to estimate $\kappa$ using
  Wuhan-exported cases if we have external data on the number of
  travellers from Wuhan and the total population of Wuhan
  \citep{imai2020report1,wu2020nowcasting}. This paper
  will focus on the estimation of $r$; in fact, we will see below that
  the selection-adjusted likelihood functions for Wuhan-exported cases
  do not depend on $\kappa$.}

\Cref{assump:model-s-parametric} restricts the
density function $h(\cdot)$ to the Gamma family, which is commonly
used to model the distribution of the incubation
period. \Cref{assump:model-t-parametric,assump:model-s-parametric}
will be used later in this and the next sections to
calculate closed-form likelihood functions. Later in \Cref{sec:bayes}, we will
relax the parametric assumptions to allow more flexible patterns for
the epidemic growth and more general distributions of the incubation
period.

\subsection{Accounting for sample selection in the likelihood}
\label{sec:sample-selection}

\subsubsection*{Study sample: Wuhan-exported cases}
\label{sec:study-sample}

To use Wuhan-exported cases to study the epidemic growth and
incubation period, it is crucial to consider the effect of sample
selection on Wuhan-exported cases. Using the notation above, the
Wuhan-exported cases confirmed in the 14 locations we consider
can be written as an event $(B,E,T,S) \in \mathcal{D}$ where
\begin{equation}
  \label{eq:study-sample}
  \mathcal{D} = \{(b,e,t,s) \in \mathcal{P} \mid b \le t \le e \le L,
  t \le s < \infty \}.
\end{equation}
Compared to the full population $\mathcal{P}$ of people with exposure
to Wuhan in \eqref{eq:population}, the set $\mathcal{D}$ makes three further
restrictions:
\begin{enumerate}
\item $B \le T \le E$, because we only use cases who contracted the
  virus during their stay in Wuhan;
\item $E \le L$, because the case can only be observed in the dataset
  if they left Wuhan before the travel quarantine;
\item $S < \infty$, because not all locations report asymptomatic
  cases, which motivates us to only consider COVID-19 cases who showed symptoms.
\end{enumerate}

\subsubsection*{Selection-adjusted likelihood functions}
\label{sec:select-adjust-likel}

In an ideal world where we could take independent observations
$(B_i,E_i,T_i,S_i),~i=1,\dotsc,n$ from
the exposed population $\mathcal{P}$, the likelihood function would be
given by a product of the density
$f(B_i,E_i,T_i,S_i)$ in \eqref{eq:joint-density} over $i$. However,
that is almost impossible for the initial COVID-19 outbreak in
Wuhan. Because of limited testing capacity in the beginning of the
outbreak, many COVID-19 patients in Wuhan were not identified.

Instead, \revv{in \Cref{sec:data}} we have obtained a high-quality dataset of
Wuhan-exported cases which can be considered as ``shadows'' of the
epidemic in Wuhan. To use this dataset, it is crucial that
the statistical inference takes into account the sample selection
because we do not have independent observations from
$\mathcal{P}$. Instead, we may view our sample as independent
observations generated from the following density: 
\begin{equation}
  \label{eq:conditional-density}
  f(b,e,t,s \mid \mathcal{D}) \overset{\Delta}{=} f \big(b,e,t,s
  \mid (B,E,T,S) \in \mathcal{D} \big) =
  \frac{f(b,e,t,s) \cdot 1_{\{(b,e,t,s) \in
      \mathcal{D}\}}
  }{\P\big((B,E,T,S) \in \mathcal{D}\big)},
\end{equation}
where $1_{\{\cdot\}}$ is the indicator function. \rev{To reduce cluttering,
we will omit the indicator $1_{\{(b,e,t,s) \in
  \mathcal{D}\}}$ if it is clear from the context that we are
considering sample from the Wuhan-exported cases $\mathcal{D}$.}
We can then use the product
\begin{equation}
  \label{eq:likelihood}
  \prod_{i=1}^n f \big(B_i,E_i,T_i,S_i \mid \mathcal{D}
  \big),
\end{equation}
as the likelihood function, under the assumption that we have observed
an independent and identically distributed sample
$(B_i,E_i,T_i,S_i),~i=1,\dotsc,n$ from the density
\eqref{eq:conditional-density}.

A further difficulty is that the time of transmission $T$ is usually
unobserved in our dataset. To solve this problem, we can either treat
$T$ as a latent variable and maximize the likelihood over both the
modeling parameters and the unobserved $T_i$, or simply marginalize
over $T$ in the full data likelihood and use the following observed
data likelihood,
\begin{equation}
  \label{eq:observed-data-likelihood}
  L_{\text{uncond}}(\theta)=\prod_{i=1}^n \int f
  \big(B_i,E_i,t,S_i \mid \mathcal{D} \big) \, dt,
\end{equation}
where $\theta = (f_B(\cdot), f_E(\cdot \mid \cdot), g(\cdot), h(\cdot))$ contains
all the parameters of interest.

We can also condition on $(B,E)$ to formulate a conditional likelihood function
that does not depend on the marginal distribution of $(B,E)$:
\begin{equation}
  \label{eq:observed-data-conditional-likelihood}
  L_{\text{cond}}(\theta) = \prod_{i=1}^n \int
  f_{T,S}\big(t,S_i \mid B_i,E_i, \mathcal{D}\big) \, dt,
\end{equation}
where $\theta = (g(\cdot), h(\cdot))$ and
\begin{equation}
  \label{eq:conditional-density-given-b-e}
  \begin{split}
  f_{T,S}(t,s \mid b,e, \mathcal{D}) &\overset{\Delta}{=} f_{T,S}\big(t,s \mid B=b,E=e,
  (B,E,T,S) \in \mathcal{D}\big) \\ &= \frac{f_{T,S}(t,s \mid b, e)
  }{\P((B,E,T,S) \in \mathcal{D} \mid B=b, E=e)}.
  \end{split}
\end{equation}
The information about the epidemic growth $g(\cdot)$ and the incubation
period $h(\cdot)$ contained in the density $f_{B,E}(\rev{b,e} \mid
\mathcal{D})$ is not used in the conditional likelihood, but the
benefit is that it does not require us to specify the nuisance
parameters $f_B(\cdot)$ and $f_E(\cdot \mid \cdot)$ to model the
traveling. In other words, the conditional likelihood is less
efficient than the unconditional likelihood but more robust.

Next we derive the likelihood functions
\eqref{eq:observed-data-likelihood} and
\eqref{eq:observed-data-conditional-likelihood}. For the unconditional
likelihood function we will make additional
parametric modeling assumptions on the traveling pattern $f_B(\cdot)$
and $f_E(\cdot \mid \cdot)$.

\subsubsection*{Computing the selection probability}
\label{sec:appr-select-prob}

The first technical problem is to compute
the denominators in \eqref{eq:conditional-density} and
\eqref{eq:conditional-density-given-b-e}. This is straightforward for
the conditional likelihood:

\begin{lemma} \label{lem:selection-probability-general}
  Under \Cref{assump:model-t,assump:model-s}, for $(b,e,t,s) \in \mathcal{D}$,
  \begin{equation}
    \begin{split}
    \label{eq:selection-probability-given-b-e}
    &\P((B,E,T,S) \in \mathcal{D} \mid B=b, E=e) = \nu [G(e) -
    G(b)],~\text{and} \\ &f_{T,S}(t,s \mid b,e, \mathcal{D}) = \frac{g(t)
      h(s-t)}{G(e) - G(b)},
    \end{split}
  \end{equation}
  where $G(t) = \int_{-\infty}^t g(x) \, dx$. If we additionally
  assume $g(t)$ is growing exponentially
  (\Cref{assump:model-t-parametric}), we have
  \begin{equation}
    \label{eq:conditional-density-given-b-e-parametric}
    f_{T,S}(t,s \mid b,e, \mathcal{D}) =
    \begin{dcases}
      \frac{r\exp(rt)}{\exp(re) -
        \exp(rb)} \, h(s-t),& \text{for}~r \neq 0,\\
      \frac{1}{e - b} \, h(s-t),& \text{for}~r = 0.
    \end{dcases}
  \end{equation}
\end{lemma}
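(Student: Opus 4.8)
The plan is to compute both quantities directly from the generative factorization \eqref{eq:joint-density}, exploiting the fact that conditioning on $B=b$ and $E=e$ eliminates any dependence on the nuisance travel densities $f_B$ and $f_E$. First I would observe that every point of $\mathcal{D}$ has $b\le e\le L$, so for such a fixed pair $(b,e)$ the only part of the defining event in \eqref{eq:study-sample} not already pinned down by $B=b,E=e$ is the pair of constraints on the remaining coordinates, namely $b\le T\le e$ together with $S<\infty$. Hence
\[
\P\big((B,E,T,S)\in\mathcal{D}\mid B=b,E=e\big)=\P\big(b\le T\le e,\ S<\infty\mid B=b,E=e\big).
\]

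Next I would factorize the conditional density of $(T,S)$ by the chain rule and apply \Cref{assump:model-t,assump:model-s}. On the region $b<t<e$, $s<\infty$, the expression \eqref{eq:model-t} gives $f_T(t\mid b,e)=g(t)$ and \eqref{eq:model-s} gives $f_S(s\mid b,e,t)=\nu\,h(s-t)$, so that $f_{T,S}(t,s\mid b,e)=\nu\,g(t)\,h(s-t)$ there. Integrating this over the selection region, the $s$-integral contributes $\int_t^\infty h(s-t)\,ds=1$ by \Cref{assump:model-s}, and the $t$-integral contributes $\int_b^e g(t)\,dt=G(e)-G(b)$ by the definition of $G$; the constant $\nu$ factors out, which yields $\nu[G(e)-G(b)]$, the first claim.

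The conditional density then follows by substituting into its definition \eqref{eq:conditional-density-given-b-e}: dividing $f_{T,S}(t,s\mid b,e)=\nu\,g(t)\,h(s-t)$ by the selection probability $\nu[G(e)-G(b)]$ causes $\nu$ to cancel, giving $g(t)h(s-t)/[G(e)-G(b)]$. For the parametric specialization I would substitute $g(t)=\kappa\exp(rt)$ from \Cref{assump:model-t-parametric} and evaluate $G(e)-G(b)=\int_b^e\kappa\exp(rt)\,dt$, which equals $(\kappa/r)[\exp(re)-\exp(rb)]$ for $r\neq0$ and $\kappa(e-b)$ for $r=0$; in either case $\kappa$ cancels against the numerator $g(t)=\kappa\exp(rt)$, producing the two displayed expressions in \eqref{eq:conditional-density-given-b-e-parametric}.

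There is no deep obstacle here, as the result is essentially careful bookkeeping, but the step demanding the most attention is the measure-theoretic setup flagged after \eqref{eq:population}: $B$, $E$, $T$, and $S$ each carry an atom (at $0$ or at $\infty$), and one must check that these atoms contribute nothing on $\mathcal{D}$. Indeed, every point of $\mathcal{D}$ has $t\in[b,e]$ and $s$ both finite, so only the continuous parts of $f_T$ and $f_S$ are active and the point masses $1-\int_b^e g$ and $1-\nu$ never enter. I would also emphasize that the reduction in the first step rests precisely on the conditional-independence $S\independent(B,E)\mid T$ of \Cref{assump:model-s} (the structure encoded in \Cref{fig:bets-dag}), which is what permits replacing $f_S(s\mid b,e,t)$ by a function of $s-t$ alone.
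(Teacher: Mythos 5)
Your proposal is correct and follows essentially the same route as the paper's own proof: both reduce the selection event to $b\le T\le e$, $S<\infty$, integrate the conditional densities $f_T(t\mid b,e)=g(t)$ and $f_S(s\mid b,e,t)=\nu\,h(s-t)$ so that the $s$-integral contributes $\nu$ and the $t$-integral contributes $G(e)-G(b)$, and then obtain the conditional density and its parametric form by division and cancellation of $\nu$ and $\kappa$. Your added remarks on the atoms at $0$ and $\infty$ and on the role of $S\independent(B,E)\mid T$ are correct but are left implicit in the paper's argument.
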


An important observation here is that
\eqref{eq:conditional-density-given-b-e-parametric} does
not depend on  $\nu$ (proportion of symptomatic cases) and
$\kappa$ (absolute scale of the epidemic). Conditional likelihood
$L_{\text{cond}}(\theta)$ can then be derived by integrating
\eqref{eq:conditional-density-given-b-e-parametric} over $t$ and the
precise formula can be found in
\Cref{prop:observed-data-likelihood-parametric} below.

For the denominator in the unconditional likelihood, we need to
integrate $\P((B,E,T,S) \in \mathcal{D} \mid B=b, E=e)$ in
\Cref{lem:selection-probability-general} over the marginal
distribution of $B$ and $E$. We make the following simplifying
assumptions on $f_B(b)$ and $f_E(e \mid b)$ which heuristically say
that the travel pattern is stable during the study period:

\begin{assumption} \label{assump:model-b-parametric}
  The beginning of stay in Wuhan $B$, conditioning on $0 < B \le L$,
  follows a uniform distribution from $0$ to $L$. More specifically,
  \begin{equation}
    \label{eq:model-b-parametric}
    f_B(b) =
    \begin{cases}
      1-\pi,&~\text{for}~b=0, \\
      \pi/L,&~\text{for}~0 < b \le L,\\
    \end{cases}
  \end{equation}
  where $0 \le \pi \le 1$ is the proportion of visitors (non-residents of Wuhan)
  in the Wuhan-exposed population.
\end{assumption}

\begin{assumption} \label{assump:model-e-parametric}
  The end of stay $E$ follows an uniform distribution from $B$ to $L$
  given $E \le L$, with rate depending on whether the person resides in Wuhan:
  \begin{equation}
    \label{eq:model-e-parametric}
    \begin{split}
      &f_E(e \mid b = 0) =
      \begin{cases}
        \lambda_W,& \text{if}~0 \le e \le L, \\
        1 - L \lambda_W, & \text{if}~e = \infty,
      \end{cases},\\
      &f_E(e \mid b, b > 0) =
      \begin{cases}
        \lambda_V,& \text{if}~b \le e \le L, \\
        1 - (L-b) \lambda_V, & \text{if}~e = \infty,
      \end{cases}
    \end{split}
  \end{equation}
  where the parameters $\lambda_W,\lambda_V \le 1/L$.
\end{assumption}

For $b > 0$, \Cref{assump:model-e-parametric} implies that
$\P(E = \infty \mid b, b>0) =b \lambda_V +  (1 - L \lambda_V)$ increases
as $b$ increases. This is consistent with our intuition that the
later someone arrives in Wuhan, the more likely that person stays
there after the travel quarantine on January 23.

By using the parametric forms \eqref{eq:model-t-parametric},
\eqref{eq:model-b-parametric}, \eqref{eq:model-e-parametric} when
integrating $\P((B,E,T,S) \in \mathcal{D} \mid B=b, E=e)$ and using
the approximation $(1 + rL) / \exp(rL) \approx 0$ for $rL > 5$, we
obtain the following result.

\begin{lemma}
  \label{lem:conditional-density-parametric}
  Under
  \Cref{assump:model-t,assump:model-s,assump:model-t-parametric,assump:model-b-parametric,assump:model-e-parametric},
  for $r > 5/L$, the selection probability is given by
  \[
    \P((B,E,T,S) \in \mathcal{D}) \approx \frac{\kappa \exp(rL) \nu}{r^2}
    \Big[(1-\pi) \lambda_W + \pi \lambda_V \Big(1 - \frac{2}{rL}\Big) \Big],
  \]
  and for $(b,e,t,s) \in \mathcal{D}$, the density in \eqref{eq:conditional-density} is given by
  \begin{equation}
    \label{eq:conditional-density-parametric}
    f(b,e,t,s \mid \mathcal{D}) \approx r^2 \cdot \frac{[ 1_{\{b = 0\}} +
      (\rho/L) 1_{\{b > 0\}}] \cdot \exp(rt)}{\Big[1 + \rho (1 -
      2/(rL)) \Big] \cdot \exp(rL)} \cdot h(s-t),
  \end{equation}
  where $1_{\{\cdot\}}$ is the indicator function and $\rho =
  (\lambda_V / \lambda_W) \cdot \pi / (1-\pi)$.
\end{lemma}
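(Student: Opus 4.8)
The plan is to start from Lemma~\ref{lem:selection-probability-general}, which under Assumptions~\ref{assump:model-t} and~\ref{assump:model-s} gives the conditional selection probability $\P((B,E,T,S) \in \mathcal{D} \mid B=b, E=e) = \nu[G(e) - G(b)]$, and then integrate it against the marginal law of $(B,E)$ prescribed by Assumptions~\ref{assump:model-b-parametric} and~\ref{assump:model-e-parametric}. First I would specialize $G$ to the exponential growth of Assumption~\ref{assump:model-t-parametric}: for $r > 0$ the antiderivative of $g(x) = \kappa\exp(rx)$ over $(-\infty, t]$ converges to $G(t) = (\kappa/r)\exp(rt)$, so that $\P(\mathcal{D} \mid b, e) = \nu(\kappa/r)[\exp(re) - \exp(rb)]$ for $b, e \in [0, L]$.

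Next I would split the Wuhan-exposed population into residents ($b = 0$, carrying mass $1-\pi$ and exit rate $\lambda_W$) and visitors ($b > 0$, with density $\pi/L$ and exit rate $\lambda_V$), and integrate the conditional selection probability over each stratum. Because membership in $\mathcal{D}$ forces $E \le L$, the $e$-integral runs only up to $L$. The resident contribution is a single integral over $e \in [0,L]$, giving $(1-\pi)\lambda_W\nu(\kappa/r^2)[\exp(rL) - 1 - rL]$. The visitor contribution is a double integral over $b \in [0,L]$ and $e \in [b,L]$; after the inner $e$-integration I would integrate the resulting $\int_0^L (L-b)\exp(rb)\,db$ by parts, arriving at $\pi\lambda_V\nu(\kappa/r^2)[\exp(rL)(1 - 2/(rL)) + 1 + 2/(rL)]$. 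Applying the stated large-$rL$ approximation $(1+rL)/\exp(rL) \approx 0$, valid for $r > 5/L$, discards the additive $O(1)$ and $O(rL)$ pieces as exponentially small relative to $\exp(rL)$: the resident term collapses to $(1-\pi)\lambda_W\nu(\kappa/r^2)\exp(rL)$ and the visitor term to $\pi\lambda_V\nu(\kappa/r^2)\exp(rL)(1 - 2/(rL))$. Summing the two strata gives the first displayed formula.

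For the conditional density I would write out the full-data joint density via the factorization~\eqref{eq:joint-density}, $f(b,e,t,s) = f_B(b)f_E(e\mid b)\,g(t)\,\nu h(s-t)$, which under the parametric forms separates into a travel weight $[(1-\pi)\lambda_W 1_{\{b=0\}} + (\pi/L)\lambda_V 1_{\{b>0\}}]$ times $\kappa\nu\exp(rt)h(s-t)$. Dividing by the selection probability just computed and using the identity $\pi\lambda_V = \rho(1-\pi)\lambda_W$ (immediate from the definition of $\rho$) to factor $(1-\pi)\lambda_W$ out of both numerator and denominator yields the claimed expression for $f(b,e,t,s\mid\mathcal{D})$.

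The main obstacle is bookkeeping in the approximation rather than any single hard computation: the visitor double integral produces several terms of differing orders in $rL$, and the delicate point is to retain the relative correction factor $1 - 2/(rL)$ multiplying the dominant $\exp(rL)$ while discarding the additive lower-order pieces. The integration by parts for $\int_0^L (L-b)\exp(rb)\,db$ is the only non-routine step, and keeping the two population strata aligned through the normalization is what makes the final factorization by $\rho$ cancel cleanly.
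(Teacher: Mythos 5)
Your proposal is correct and follows essentially the same route as the paper: integrate the conditional selection probability from \Cref{lem:selection-probability-general}, specialized to $G(t) = (\kappa/r)\exp(rt)$, over the $(B,E)$ law stratified into residents ($b=0$) and visitors ($b>0$), apply the $(1+rL)/\exp(rL)\approx 0$ approximation to discard the lower-order terms (your exact intermediate expressions, e.g.\ the visitor term $\pi\lambda_V\nu(\kappa/r^2)[\exp(rL)(1-2/(rL))+1+2/(rL)]$, agree with the paper's computation via $A_1 = \int_0^L(1+rx)\exp(-rx)\,dx \approx 2/r$), and then form the ratio of the joint density to the selection probability, reparametrizing with $\rho$. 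The only difference is bookkeeping order—the paper approximates the $b=0$ stratum before integrating over $b$, while you keep exact expressions until a single approximation step at the end—which is immaterial.
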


Similar to \eqref{eq:conditional-density-given-b-e-parametric}, the
conditional density \eqref{eq:conditional-density-parametric} does not
depend on $\nu$ and $\kappa$. Moreover, it only depends on the
traveling parameters $\pi$, $\lambda_V$ and $\lambda_W$ through a
single transformed parameter $\rho$. The approximation $(1 + rL) /
\exp(rL) \approx 0$ is  used in the \aoasdel{Appendix}\aoasins{Online
  Supplement} to obtain the analytical formulae in
\Cref{lem:conditional-density-parametric}. It is quite reasonable for $rL
> 5$ (if the doubling time is 4 days, $rL = \log(2) /
4 \times 54 = 9.34$).

\subsubsection*{Observed data likelihood}
\label{sec:observ-data-likel}

As explained after equation \eqref{eq:likelihood}, we cannot immediately use
the densities in
\eqref{eq:conditional-density-given-b-e-parametric} and
\eqref{eq:conditional-density-parametric} for
statistical inference because we do not observe the time of
transmission $T$. The final step in the derivation of our likelihood
function is to marginalize over $t$ in the density
functions. The parametric form of $h(\cdot)$ in
\Cref{assump:model-s-parametric} allows us to derive closed-form
formulae.

\begin{proposition} \label{prop:observed-data-likelihood-parametric}
  Under
  \Cref{assump:model-t,assump:model-s,assump:model-t-parametric,assump:model-s-parametric},
  the observed data conditional likelihood
  \eqref{eq:observed-data-conditional-likelihood} 
  is given by
  \begin{equation}
    \label{eq:observed-data-conditional-likelihood-parametric}
    \begin{split}
      &L_{\text{cond}}(r,\alpha,\beta) \\
      =&
    \begin{dcases}
      r^n
      \Big(\frac{\beta}{\beta+r}\Big)^{n\alpha} \cdot \prod_{i=1}^n
      \frac{\exp(r S_i) \big[H_{\alpha,\beta+r}(S_i - B_i) -
        H_{\alpha,\beta+r}((S_i-E_i)_+) \big]}{\exp(rE_i) -
        \exp(rB_i)},& \text{for}~r>0,\\
      \prod_{i=1}^n \frac{H_{\alpha,\beta}(S_i
        - B_i ) - H_{\alpha, \beta} ((S_i - E_i)_+)}{E_i - B_i},& \text{for}~r=0,
    \end{dcases}
    \end{split}
  \end{equation}
  where $H_{\alpha,\beta }(\cdot)$ is the cumulative distribution
  function of the Gamma distribution with shape $\alpha$ and rate
  $\beta$ and $(x)_+ = \max(x,0)$ is the positive part of $x$.
  Under
  \Cref{assump:model-t,assump:model-s,assump:model-t-parametric,assump:model-s-parametric,assump:model-b-parametric,assump:model-e-parametric},
  the observed data unconditional likelihood
  \eqref{eq:observed-data-likelihood} for $r > 5/L$ is approximately given by
  \begin{equation}
    \label{eq:observed-data-likelihood-parametric}
    \begin{split}
      L_{\text{uncond}}(\rho,r,\alpha,\beta)
      \approx r^{2n} \Big(\frac{\beta}{\beta+r}\Big)^{n\alpha} \cdot \prod_{i=1}^n\bigg\{&\frac{1_{\{B_i = 0\}} +
        (\rho/L) 1_{\{B_i > 0\}} }{1 + \rho (1 - 2/(rL))} \exp\big\{r (S_i - L)\big\} \\
      & \times \big[H_{\alpha,\beta+r}(S_i - B_i) -
      H_{\alpha,\beta+r}((S_i-E_i)_+) \big]\bigg\}.
    \end{split}
  \end{equation}
\end{proposition}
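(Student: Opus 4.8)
The plan is to reduce both likelihood formulae to a single elementary integral---the exponential tilt of a Gamma density---and then assemble the products over the $n$ observations.

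First I would treat the conditional likelihood. Starting from \Cref{lem:selection-probability-general}, the density $f_{T,S}(t, S_i \mid B_i, E_i, \mathcal{D})$ under \Cref{assump:model-t-parametric} is already in closed form via \eqref{eq:conditional-density-given-b-e-parametric}. Substituting the Gamma form of $h$ from \Cref{assump:model-s-parametric} into \eqref{eq:observed-data-conditional-likelihood}, the only task is to evaluate $\int \exp(rt)\, h_{\alpha,\beta}(S_i - t)\, dt$. The range of integration is fixed by the constraint $(B_i, E_i, t, S_i) \in \mathcal{D}$: from \eqref{eq:study-sample} we need $B_i \le t \le E_i$ and $t \le S_i$, so $t$ runs over $[B_i, \min(E_i, S_i)]$, the upper bound being $\min(E_i, S_i)$ because $h$ vanishes once the incubation period $S_i - t$ would be negative.

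Next I would substitute $u = S_i - t$, turning the integral into $\exp(r S_i) \int_{(S_i - E_i)_+}^{S_i - B_i} \exp(-r u)\, h_{\alpha, \beta}(u)\, du$, where the lower limit $(S_i - E_i)_+ = \max(S_i - E_i, 0)$ records whether the symptom onset occurs before or after the person leaves Wuhan. The crux is the exponential-tilting identity $\exp(-ru)\, h_{\alpha,\beta}(u) = (\beta / (\beta+r))^{\alpha}\, h_{\alpha, \beta+r}(u)$, which follows immediately from \eqref{eq:model-s-parametric} by absorbing $\exp(-ru)$ into the exponent and renormalising. Integrating the tilted Gamma density produces the CDF difference $H_{\alpha, \beta+r}(S_i - B_i) - H_{\alpha, \beta+r}((S_i - E_i)_+)$. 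Collecting the factor $r / (\exp(rE_i) - \exp(rB_i))$ from \eqref{eq:conditional-density-given-b-e-parametric} and taking the product over $i$ yields the formula for $r > 0$; the case $r = 0$ is the same calculation with no tilting (the rate stays at $\beta$ and the prefactor becomes $1/(E_i - B_i)$).

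For the unconditional likelihood I would instead begin from \Cref{lem:conditional-density-parametric}, whose formula \eqref{eq:conditional-density-parametric} already folds in \Cref{assump:model-b-parametric,assump:model-e-parametric} and again factors as a $B_i$-dependent prefactor times $\exp(rt)\, h(s-t)$. The integral over $t$ is therefore identical to the one above, so the tilting computation transfers verbatim; the only change is that the selection denominator is now the $B_i$-independent constant $[1 + \rho(1 - 2/(rL))]\exp(rL)$. Combining $\exp(rS_i)$ with $\exp(rL)$ in the denominator gives the $\exp\{r(S_i - L)\}$ term, and the product over $i = 1, \dots, n$ assembles \eqref{eq:observed-data-likelihood-parametric}. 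I do not expect a genuine obstacle, since this is a careful computation rather than a conceptual step: the only places demanding care are getting the lower limit $(S_i - E_i)_+$ right for cases whose symptoms begin after departure, and checking that the constants $(\beta/(\beta+r))^{\alpha}$ and the powers of $r$ accumulate correctly into $r^n$ and $r^{2n}$. The approximation sign in the unconditional formula is inherited directly from \Cref{lem:conditional-density-parametric} and needs no further argument.
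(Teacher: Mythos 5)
Your proposal is correct and follows essentially the same route as the paper: the core step in both is the exponential-tilting identity $\exp(-ru)\,h_{\alpha,\beta}(u) = (\beta/(\beta+r))^{\alpha} h_{\alpha,\beta+r}(u)$, which is precisely the paper's Lemma~\ref{lem:gamma-integral}, applied to the selection-adjusted densities of Lemmas~\ref{lem:selection-probability-general} and~\ref{lem:conditional-density-parametric} with the same integration limits $[B_i,\min(E_i,S_i)]$. The only difference is presentational: the paper works out the unconditional likelihood in detail and omits the conditional case, whereas you do the reverse.
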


It is worthwhile to point out that if $r = 0$ (the epidemic was
stationary), our conditional likelihood function
$L_{\text{cond}}(r,\alpha,\beta)$ reduces to the
likelihood function for interval-censored exposure in
\citet{reich2009estimating}. However, COVID-19 was growing quickly
during its early outbreak in Wuhan, so the growth exponent $r$ is very
different from $0$. It is thus inappropriate to use the likelihood
$L_{\text{cond}}(0,\alpha,\beta)$ to estimate the incubation period of
COVID-19, as done in some previous analyses also using
Wuhan-exported cases
\citep{backer2020incubation,lauer2020incubation,linton2020incubation}. See
\Cref{sec:estim-incub-peri} for further discussion and an illustration
of the bias due to ignoring the epidemic growth.

\subsection{Results of the parametric inference}
\label{sec:results-param-infer}

\subsubsection*{Implementation}
\label{sec:impl-likel-infer}

To fit the statistical model, we used the $378$ Wuhan-exported cases
that satisfy our sample selection criterion and do not have missing
symptom onset date. We fitted separate models for
different locations to compare the results across the locations.

As the model in \Cref{sec:model} is a regular parametric model, we
performed the usual frequentist inference using the likelihood
function \eqref{eq:observed-data-likelihood-parametric}. In
particular, point estimators of the parameters $(\rho,r,\alpha,\beta)$
were obtained by maximizing the likelihood function
\eqref{eq:observed-data-likelihood-parametric}, and confidence
intervals for the parameters were obtained by inverting the
likelihood ratio $\chi^2$-test. As we are more interested in quantiles
of the incubation period instead of the shape and rate parameters, we
parametrized the Gamma distribution in \Cref{assump:model-s-parametric}
by its median and 95\% quantile and mapped them to $\alpha$ and $\beta$
when calculating the likelihood function. The growth exponent $r$ was
also transformed to the more interpretable doubling time (in days) using
$\text{doubling time} = \log(2) / r$.

Because we only observed the date instead of the exact time for $B$,
$E$, and $S$, we applied a simple transformation before computing the
likelihood function. Instead of using the integer date which corresponds to
the end of a day, we used $B - 3/4$, $E - 1/4$, and $S-1/2$ in places of
$B$, $E$, and $S$ to compute
\eqref{eq:observed-data-conditional-likelihood-parametric} and
\eqref{eq:observed-data-likelihood-parametric}. This transformation
also avoids a singularity in the likelihood function when $B$ and $E$
are exactly equal.

\subsubsection*{Results}
\label{sec:results}

\begin{table}[t] \small 
  \centering
  \caption{Results of the parametric inference. For each location and
    parameter, the maximum likelihood estimator and the
    95\% confidence interval (in brackets) based on inverting the
    likelihood ratio test are reported.}
  \label{tab:results-parametric}
  \begin{tabular}{lccccc}
    \toprule
    \multirow{2}{*}{\bfseries Location} & {\bfseries Sample} &
                                                               \multirow{2}{*}{$\bm
                                                                                   \rho$}
    & {\bfseries Doubling time} & \multicolumn{2}{c}{\bfseries  Incubation period}
    \\
                              & {\bfseries size} & & {\bfseries (in days)} & {\bfseries
                                                            Median} &
                                                                      {\bfseries 95\% quantile} \\
    \midrule
                               \multicolumn{6}{c}{\bfseries
    Conditional likelihood} \\
    China - Hefei &  34 & Not estimated  & 2.1 (1.2--3.7) & 4.3 (2.9--6.0) &
                                                                      12.0 (9.1--17.3) \\
  China - Shaanxi &  53 & Not estimated  & 1.7 (1.0--2.8) & 4.5 (3.1--6.2) &
                                                                      14.6 (11.5--19.8) \\
  China - Shenzhen & 129 & Not estimated  & 2.2 (1.7--3.0) & 3.5 (2.8--4.3) &
                                                                       11.2 (9.5--13.6) \\
  China - Xinyang &  74 & Not estimated  & 2.3 (1.5--3.5) & 6.8 (5.4--8.2) &
                                                                      16.4 (13.8--20.1) \\
  China - Other &  42 & Not estimated  & 2.0 (1.1--3.4) & 5.1 (3.6--6.7) &
                                                                    12.3 (9.8--16.4) \\
  International &  46 & Not estimated  & 2.1 (1.4--3.4) & 3.8
                                                                (2.5--5.3)
                              & 10.9 (8.4--15.1) \\
  {\bfseries All locations} & 378 & Not estimated  & 2.1 (1.8--2.5) & 4.5 (4.0--5.0) &
                                                                    13.4 (12.2--14.8) \\
  {\bfseries All exc.\ Xinyang} & 304 & Not estimated  & 2.1 (1.7--2.5) & 4.0 (3.5--4.6)
                              & 12.2 (11.0--13.7) \\
    \midrule
                              \multicolumn{6}{c}{\bfseries
    Unconditional likelihood} \\
    China - Hefei &  34 & .40 (.18--.82) & 1.8 (1.4--2.4) & 4.1
                                                                 (2.8--
                                                                  5.5)
                              & 11.9 (9.0--17.2) \\
    China - Shaanxi &  53 & .24 (.11--.46) & 2.5 (2.0--3.1) &
                                                                    5.3
                                                                    (3.9--
                                                                    6.8)
                              & 15.0 (12.0--20.0) \\
    China - Shenzhen & 129 & .75 (.52--1.06) & 2.4 (2.1--2.8) &
                                                                     3.6
                                                                     (2.9--
                                                                     4.3)
                              & 11.3 (9.6--13.7) \\
    China - Xinyang &  74 & .45 (.27--.74) & 2.4 (2.0--2.9) &
                                                                    6.8
                                                                    (5.6--
                                                                    8.1)
                              & 16.4 (13.9--20.2) \\
    China - Other &  42 & .45 (.22--.86) & 2.1 (1.7--2.8) & 5.3
                                                                  (4.0--
                                                                  6.6)
                              & 12.4 (10.0--16.4) \\
    International &  46 & .14 (.05--.32) & 2.0 (1.6--2.6) & 3.7
                                                                  (2.5--
                                                                  5.0)
                              & 10.8 (8.4--15.1) \\
    {\bfseries All locations} & 378 & .45 (.36--.56) & 2.3 (2.1--2.5) & 4.6
                                                                  (4.1--
                                                                  5.1)
                              & 13.5 (12.3--14.9) \\
    {\bfseries All exc.\ Xinyang} & 304 & .45 (.35--.57) & 2.2 (2.1--2.5) &
                                                                       4.1
                                                                       (3.7--
                                                                       4.6)
                              & 12.3 (11.1--13.8) \\
    \bottomrule
  \end{tabular}
\end{table}

Results of the parametric model in \Cref{sec:model} are reported in
\Cref{tab:results-parametric}. We give some remarks about the
results:
\begin{enumerate}
\item There is considerable heterogeneity of the estimated $\rho$ (\aoasdel{a
  parameter capturing the traveling pattern}\aoasins{the
  ratio of outbound travel prevalences between Wuhan visitors and Wuhan
  residents}) using the unconditional
  likelihood. This is not surprising given that the locations we are
  considering are different in many ways.
\item Regardless of the location, our model shows that the epidemic
  doubling time in Wuhan was less than \aoasdel{3}\aoasins{2.5}
  days. There is no substantial heterogeneity among estimates in
  different locations.
\item The estimated incubation periods are similar for most locations
  except Xinyang, a less developed city neighboring the Hubei
  province. \aoasins{It is possible that the public health system in
    Xinyang had a lower ascertainment rate during the early outbreak.}
\item The conditional likelihood
  \eqref{eq:observed-data-conditional-likelihood-parametric} and
  unconditional likelihood
  \eqref{eq:observed-data-likelihood-parametric} give very similar
  results. Confidence intervals \revv{for the doubling time} computed using
  the unconditional likelihood are slightly shorter than those
  computed using the conditional likelihood.
\end{enumerate}

In conclusion, inferences based on our parametric model suggest that
the initial doubling time of the COVID-19 epidemic in Wuhan was
between 2 to 2.5 days, the median incubation period of COVID-19 is
around 4 days, and the 95\% quantile of the incubation period is
between 11 to 15 days.

\section{Why some previous COVID-19 analyses were severely biased}
\label{sec:why-prev-analys}

\subsection{Estimating the epidemic growth: Bias due to ignoring the travel quarantine}
\label{sec:estim-epid-growth}

In this section we discuss the selection bias in some early COVID-19
analyses. Like the present study, a highly influential article
published in the \emph{Lancet} in late January also used
Wuhan-exported cases to estimate the epidemic growth during the early outbreak
\citep{wu2020nowcasting}. However, their estimated doubling time was
6.4 days (95\% credible interval: 5.8--7.1), drastically higher than the
estimates in \Cref{tab:results-parametric}.

A closer look at the model in \citet{wu2020nowcasting} shows that the
most likely reason is that their model did not consider
how sample selection (in particular, the travel quarantine of Wuhan)
changes the likelihood function. This issue is best illustrated by
examining the marginal distribution of symptom onset in Wuhan-exported
cases, which can be obtained by integrating the conditional density
\eqref{eq:conditional-density-parametric} obtained earlier. \revv{In the
Proposition below we focus on the exported cases who are Wuhan
residents ($B=0$), whose marginal distributions of $T$ and $S$ are
slightly cleaner than those who visited Wuhan ($B > 0$).}

\begin{proposition} \label{prop:marginal-s}
  Under
  \Cref{assump:model-t,assump:model-s,assump:model-t-parametric,assump:model-b-parametric,assump:model-e-parametric},
  the marginal density of $T$ given $(B,E,T,S) \in \mathcal{D}$ for $r
  > 5/L$ 
  is approximately given by
  \begin{equation}
    \label{eq:t-marginal}
    f_T(t \mid \mathcal{D}\revv{,B=0}) \appropto \exp(rt) (L-t) \cdot 1_{\{t \le L\}},
  \end{equation}
  where $\appropto$ means approximately proportional to. If in
  addition the incubation
  period $S-T$ follows a $\text{Gamma}(\alpha,\beta)$ distribution
  (\Cref{assump:model-s-parametric}) \rev{ and $L >
    4(\alpha+5)/(\beta+r)$}, the marginal density of $S$ of
  the exported cases is approximately given by, \rev{for $s \ge L/2$},
  \begin{equation} \label{eq:s-marginal-full}
    \begin{split}
      &f_S(s \mid \mathcal{D}\revv{,B=0})\\
      \appropto& \exp(rs) \cdot \Big\{ (L-s)
    [1-H_{\alpha,\beta+r}((s-L)_+)] + \frac{\alpha}{\beta + r}[1 -
    H_{\alpha+1,\beta+r}((s-L)_+)]\Big\},
    \end{split}
  \end{equation}
  As a consequence,
  \begin{equation}
    \label{eq:s-marginal}
    f_S(s \mid \mathcal{D}\revv{,B=0}) \appropto \exp(rs) \cdot \Big(L +
    \frac{\alpha}{\beta+r} - s \Big)~\text{for}~\rev{ L/2 \le }s \le L.
  \end{equation}

\end{proposition}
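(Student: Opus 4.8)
The plan is to start from the selection-adjusted conditional density \eqref{eq:conditional-density-parametric} in \Cref{lem:conditional-density-parametric}, specialize it to Wuhan residents by setting $b = 0$, and then integrate out the nuisance coordinates to read off each marginal. Setting $b=0$, every factor that does not involve $(t,s)$---the normalizing denominator and the $\exp(rL)$ term---is constant and may be absorbed into the proportionality, leaving
\[
  f(0,e,t,s \mid \mathcal{D}) \appropto \exp(rt)\,h(s-t)
\]
on the region $\{0 < t < e \le L,\ t \le s\}$. The key structural point is that this is constant in $e$, because under \Cref{assump:model-e-parametric} the density $f_E(\cdot \mid b=0)$ is uniform on $[0,L]$. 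Since we only work up to proportionality, the normalizing constant of the $B=0$ slice can be ignored throughout.

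For \eqref{eq:t-marginal} I would integrate this joint density over $s \in [t,\infty)$ and $e \in [t,L]$. The $s$-integral equals $1$ because $h$ is a density (\Cref{assump:model-s}), and the $e$-integral contributes $\int_t^L de = L - t$ on $\{t \le L\}$, giving $f_T(t \mid \mathcal{D}, B=0) \appropto \exp(rt)(L-t)\,1_{\{t \le L\}}$ at once.

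For the marginal of $S$ I would again integrate out $e$ first (yielding the factor $L - t$ on $\{t \le L\}$), obtaining $f_S(s \mid \mathcal{D},B=0) \appropto \int_0^{\min(s,L)} \exp(rt)(L-t)\,h(s-t)\,dt$, and then substitute $u = s-t$ so the integral becomes $\exp(rs)\int_{(s-L)_+}^{s}\exp(-ru)(L-s+u)\,h(u)\,du$. Under \Cref{assump:model-s-parametric} the exponential tilt turns the Gamma density into a rescaled Gamma, $\exp(-ru)\,h_{\alpha,\beta}(u) = (\beta/(\beta+r))^{\alpha}\,h_{\alpha,\beta+r}(u)$, where the prefactor is constant in $s$; and the moment identity $u\,h_{\alpha,\beta+r}(u) = \tfrac{\alpha}{\beta+r}\,h_{\alpha+1,\beta+r}(u)$ lets me split the integrand $(L-s+u)$ into two Gamma integrals. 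Evaluating both in terms of Gamma CDFs yields the \emph{exact} expression
\[
  f_S(s \mid \mathcal{D},B=0) \appropto \exp(rs)\Big\{(L-s)\big[H_{\alpha,\beta+r}(s) - H_{\alpha,\beta+r}((s-L)_+)\big] + \tfrac{\alpha}{\beta+r}\big[H_{\alpha+1,\beta+r}(s) - H_{\alpha+1,\beta+r}((s-L)_+)\big]\Big\}.
\]

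The only approximation, and the main obstacle, is replacing $H_{\alpha,\beta+r}(s)$ and $H_{\alpha+1,\beta+r}(s)$ by $1$ to reach \eqref{eq:s-marginal-full}; this is a Gamma upper-tail estimate. For $s \ge L/2$ together with $L > 4(\alpha+5)/(\beta+r)$ one gets $s > 2(\alpha+5)/(\beta+r)$, which places $s$ far above the means $\alpha/(\beta+r)$ and $(\alpha+1)/(\beta+r)$ of the two tilted Gamma laws: the gap is at least $(\alpha+9)/(\beta+r)$, i.e.\ at least $(\alpha+9)/\sqrt{\alpha+1} \ge 4\sqrt{2}$ standard deviations of the $\mathrm{Gamma}(\alpha+1,\beta+r)$ (minimizing over $\alpha$), so $1-H_{\alpha,\beta+r}(s)$ and $1-H_{\alpha+1,\beta+r}(s)$ are negligible. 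Making this quantitative requires a Gamma concentration/Chernoff bound, and it is here that the explicit constant $4(\alpha+5)$ is calibrated. Finally, \eqref{eq:s-marginal} is immediate from \eqref{eq:s-marginal-full}: when $s \le L$ we have $(s-L)_+ = 0$, so $H_{\alpha,\beta+r}(0) = H_{\alpha+1,\beta+r}(0) = 0$ and the braces collapse to $(L-s) + \alpha/(\beta+r)$.
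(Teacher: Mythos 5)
Your proposal is correct and follows essentially the same route as the paper's appendix: marginalize the selection-adjusted density \eqref{eq:conditional-density-parametric} with $b=0$ over the nuisance coordinates to get \eqref{eq:t-marginal}, then reduce $f_S$ to $\int_0^{\min(s,L)}(L-t)\exp(rt)h_{\alpha,\beta}(s-t)\,dt$, split $(L-t)=(L-s)+(s-t)$, rewrite via tilted Gamma CDFs, and finally replace $H_{\alpha,\beta+r}(s)$ and $H_{\alpha+1,\beta+r}(s)$ by $1$. The single step you leave as a sketch---the quantitative Gamma tail bound---is exactly what the paper carries out: a Chernoff bound with $c=(1-e^{-1})(\beta+r)$ gives $1-H_{\alpha+1,\beta+r}(L/2)<0.01$ under $L>4(\alpha+5)/(\beta+r)$, confirming your remark about how that constant is calibrated (your standard-deviation argument could alternatively be made quantitative with a one-sided Chebyshev bound, at the cost of a weaker constant).
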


\rev{The technical assumption $L >
  4(\alpha+5)/(\beta+r)$ is used to control the tail probability of
  a Gamma distribution so we may replace $H_{\alpha,\beta+r}(s)$ and
  $H_{\alpha+1,\beta+r}(s)$ by $1$ in \eqref{eq:s-marginal-full}. It
  is usually satisfied if $L$ is larger than several times
  the mean incubation period $\alpha/\beta$ and is satisfied here with the
  maximum likelihood estimator of $(r,\alpha,\beta)$ in
  \Cref{sec:results}.}

\begin{figure}[t]
  \centering
  \includegraphics[width = 0.6\textwidth]{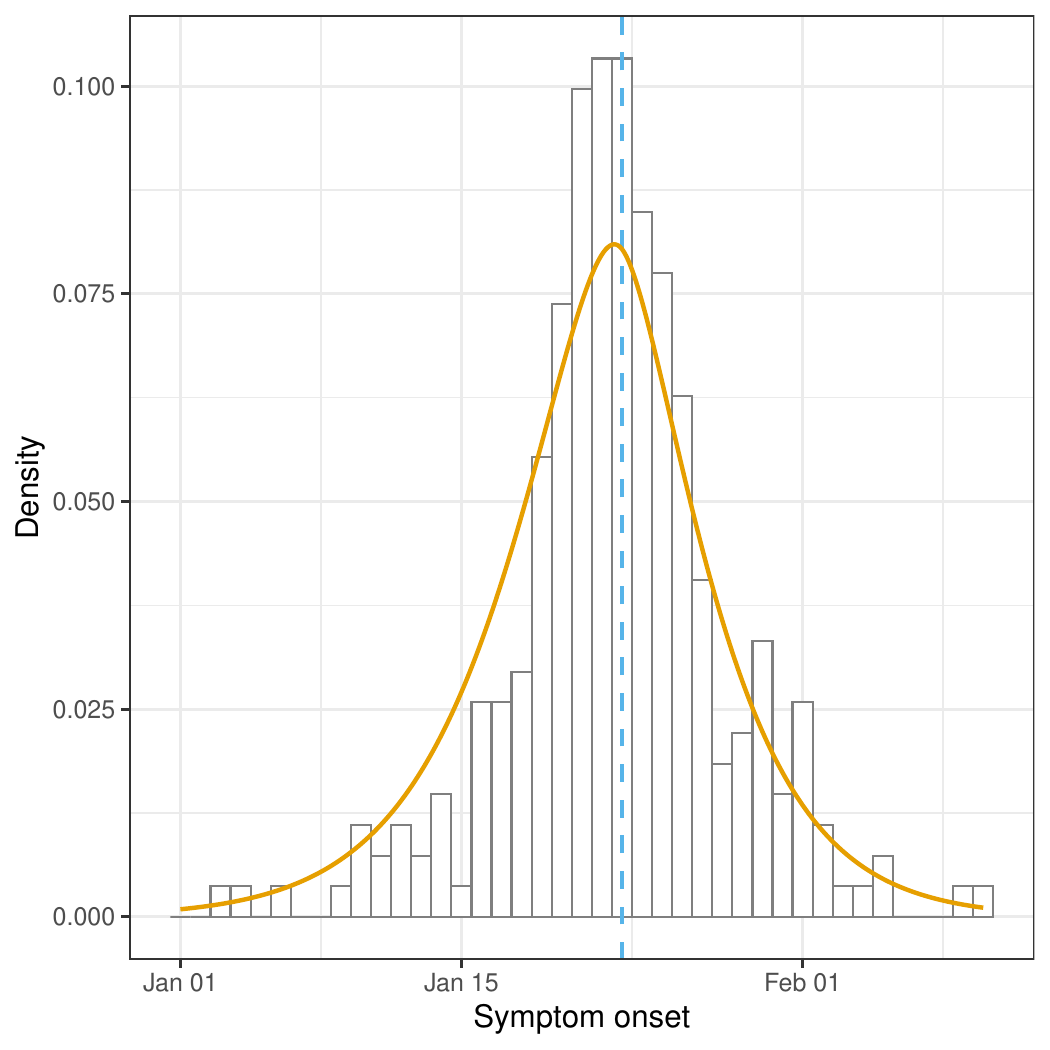}
  \caption{Marginal distribution of symptom onset of exported
    COVID-19 cases who are Wuhan residents. Histogram:
    Density of the symptom onset date of the cases dataset; Orange
    curve: Theoretical fit based on
    \eqref{eq:s-marginal-full}; Blue dashed line: Date of travel
    quarantine for Wuhan (January 23, 2020).}
  \label{fig:s-marginal}
\end{figure}

\Cref{fig:s-marginal} shows the histogram of the symptom onset of the
exported cases in our dataset who are Wuhan residents ($B=0$) and the theoretical fit based on
\eqref{eq:s-marginal-full} and the maximum unconditional likelihood
estimator in \Cref{tab:results-parametric} using all the locations ($r
= 0.30, \alpha = 1.86, \beta = 0.33$). The theoretical density provided good
fit to the observed distribution of $S$ (Pearson's $\chi^2$
goodness-of-fit test: $p$-value $=$ 0.94).

\citet{wu2020nowcasting} fitted a
Susceptible-Exposed-Infectious-Recovered (SEIR) model using
Wuhan-exported cases but did not consider sample selection due to the
travel quarantine. In the early phase of epidemic outbreaks, the SEIR
model can be well approximated by an exponential growth for cases in
Wuhan:
\[
  f_S(s) \appropto \exp(r s).
\]
However, \Cref{prop:marginal-s}, in particular equation
\eqref{eq:s-marginal}, shows that the marginal distribution of $S$ for
\revv{exported Wuhan residents $f_S(s \mid \mathcal{D},B=0)$} does not follow
the same exponential growth as $f_S(s)$.

Equation \eqref{eq:s-marginal} not only shows that fitting a simple
exponential growth to the initial symptom onsets among Wuhan-exported
cases will under-estimate the epidemic growth $r$, \revv{it can also be used}
to derive a simple bias-correction formula. \rev{We can approximate the
  log-linear regression
  for symptom incidence counts from $L - c$ to $L$ by a
  first-order Taylor expansion at the midpoint:
  \[
    \begin{split}
      \log f_S(s \mid \mathcal{D}\revv{,B=0}) &\approx rs + \log\Big( L +
      \frac{\alpha}{\beta+r} - s \Big) + \text{constant} \\
      &= rs + \log\Big( \alpha/ (\beta+r) + c/2 + (L - c/2 - s) \Big) +
      \text{constant} \\
      &\approx rs + \log\Big(\frac{\alpha}{\beta+r} + \frac{c}{2}\Big) +
      \frac{L-c/2-s}{\alpha/(\beta+r) + c/2} + \text{constant} \\
      &= \Big[r - \frac{1}{\alpha/(\beta+r) + c/2} \Big] s + \text{constant}.
    \end{split}
  \]
  Therefore the under-estimation bias is about $(\alpha/(\beta+r) +
  c/2)^{-1}$. As most of the symptom onsets of
  Wuhan-exported cases before the travel quarantine happened within two weeks, it
  might be reasonable to choose $c = 14$. Using our estimate of
  $(r,\alpha,\beta)$ in \Cref{tab:results-parametric}, the
  under-estimation bias is about $((1.86)/(0.3+0.33) + 14/2)^{-1}
  \approx 0.1$.
}

Using Wuhan-exported cases confirmed outside Mainland China by January
28, 2020, \citet{wu2020nowcasting} estimated that the doubling time of
COVID-19
was about 6.4 days, which corresponds to $r = \log(2) / 6.2 \approx
0.11$. With the above correction, \rev{the estimated $r$ would be
  $0.11 + 0.1 \approx 0.21$, or doubling time of 3.3 days. In other
  words, this simple correction already shows that the epidemic could be
  doubling twice as fast as estimated by \citet{wu2020nowcasting}.}

\rev{The actual bias of the analysis in \citet{wu2020nowcasting} is
  more complicated than the inexact calculations
  above. This is because \citet{wu2020nowcasting} fitted their
  SEIR model also using symptom onset after January 23 (time $L$). Our
  theory in \Cref{prop:marginal-s} suggests that the $f_S(s \mid
  \mathcal{D})$ not only has slower and slower growth as $s$
  approaches $L$ but also decreases eventually. This means that the
  inclusion of symptom onsets after January 23 may lead to further
  under-estimation of $r$. This also explains why, after the simple
  correction above, the epidemic growth estimate of
  \citet{wu2020nowcasting} is still not as fast as ours in
  \Cref{tab:results-parametric}.}

\subsection{Estimating the incubation period: When two biases do not
  ``balance out''}
\label{sec:estim-incub-peri}

Like the present study, several influential articles also estimated
the incubation period of COVID-19 using Wuhan-exported cases
\citep{backer2020incubation,lauer2020incubation,linton2020incubation}. Their
results are roughly in line with our estimates in
\Cref{tab:results-parametric} with lighter tails, but a closer look
shows that the existing methods actually suffer from two biases:

\begin{enumerate}
\item {\bfseries Bias due to right-truncation:} The three previous studies only
  used Wuhan-exported cases confirmed before the end of January. In
  our dataset, about 70\% of the Wuhan-exported cases were confirmed
  by that time. However, the other 30\% would have an incubation
  period of at least 8 days as they must have left Wuhan before
  January 23. The right truncation, if not accounted for, leads to
  under-estimation of the incubation period.
\item {\bfseries Bias due to ignoring epidemic growth:} The
  three previous studies all used the interval-censored likelihood
  function for the incubation period in
  \citet{reich2009estimating}. As discussed after
  \Cref{prop:observed-data-likelihood-parametric}, this likelihood
  corresponds to our conditional likelihood
  $L_{\text{cond}}(\alpha,\beta)$ with $r$ fixed at $0$ and
  thus does not account for the rapid growth of COVID-19. Intuitively,
  a person in Wuhan has a much higher prior
  probability of contracting the virus on January 20 than on January
  1, but the likelihood function in \citet{reich2009estimating} does not
  take that into account. Ignoring the epidemic growth leads to
  over-estimation of the incubation period.
\end{enumerate}

It is possible to correct for the right-truncation by further
conditioning on $S \le M$ ($M$ is some truncation time) in our
likelihood function.

\begin{proposition}
  \label{prop:likelihood-truncation}
  Under \Cref{assump:model-t,assump:model-s}, for $(b,e,t,s) \in
  \mathcal{D}$ and $s \le M$,
  \begin{equation}
    \label{eq:conditional-density-truncation}
    f_{T,S}(t,s \mid b,e,\mathcal{D}, S \le M) = \frac{g(t)
      h(s-t)}{\int_b^{\max(e,s)} g(t) H(M-t) \, dt},
  \end{equation}
  where $H(s) = \int_0^s h(x) \, dx$ is the distribution function of the
  incubation period. Furthermore, under the exponential growth model
  (\Cref{assump:model-t-parametric}) and Gamma-distributed incubation
  period (\Cref{assump:model-s-parametric}), the conditional observed
  data likelihood under the right truncation $S \le M$ is given by
  \begin{equation}\label{eq:conditional-likelihood-truncation} \small
    \begin{split}
      &L_{\text{cond,trunc}}(r,\alpha,\beta;M)\\ =&
      \begin{dcases}
        r^n \Big(\frac{\beta}{\beta+r}\Big)^{n\alpha} \prod_{i=1}^n
        \frac{\exp\{r(S_i - M)\} [H_{\alpha,\beta+r}(S_i - B_i) -
          H_{\alpha,\beta+r}((S_i - E_i)_+)]}{Z_r(M - B_i) - Z_r((M -
          E_i)_+)},& \text{if}~r \neq 0, \\
        \prod_{i=1}^n \frac{H_{\alpha,\beta}(S_i
          - B_i ) - H_{\alpha, \beta} ((S_i - E_i)_+)}{Z_0(M-B_i) - Z_0((M-E_i)_+)},& \text{if}~r=0,
      \end{dcases}
    \end{split}
  \end{equation}
  where
  \[
    Z_r(x) =
    \begin{dcases}
      \Big(\frac{\beta}{\beta+r}\Big)^{\alpha}
      H_{\alpha,\beta+r}(x) - \exp(- rx) H_{\alpha,\beta}(x),& \text{if}~r \neq 0, \\
      x H_{\alpha,\beta}(x) - \Big(\frac{\alpha}{\beta}\Big) H_{\alpha+1,\beta}(x),& \text{if}~r=0.
    \end{dcases}
  \]
\end{proposition}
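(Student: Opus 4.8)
The plan is to obtain \eqref{eq:conditional-density-truncation} by adding one more conditioning event to the density already derived without truncation. By \Cref{lem:selection-probability-general}, under \Cref{assump:model-t,assump:model-s} the selection-adjusted joint density of $(T,S)$ given $B=b$, $E=e$, and membership in $\mathcal{D}$ is $g(t)h(s-t)/[G(e)-G(b)]$. Right-truncation at $M$ is just the extra event $\{S\le M\}$, so I would write $f_{T,S}(t,s\mid b,e,\mathcal{D},S\le M) = g(t)h(s-t)\,1_{\{s\le M\}}/\,\P(S\le M,\mathcal{D}\mid B=b,E=e)$ after the common factors $\nu$ and $G(e)-G(b)$ cancel between numerator and denominator. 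The only real computation is the selection probability: integrate $g(t)h(s-t)$ over the region cut out by $\mathcal{D}\cap\{s\le M\}$ with $(b,e)$ fixed, namely $\{b\le t\le e,\ t\le s\le M\}$. Doing the inner $s$-integral over $[t,M]$ turns $h(s-t)$ into $H(M-t)$ and leaves $\int_b^{e} g(t)H(M-t)\,dt$. The one point needing care is the support of $t$: both $t\le e$ and $t\le s$ are required, so in the $T$-marginalization the $t$-variable runs only up to $\min(e,s)$, whereas the normalizing constant is the $s$-independent quantity obtained by integrating over the full region.

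For the closed-form likelihood I would then substitute the parametric forms $g(t)=\kappa\exp(rt)$ and $h=h_{\alpha,\beta}$ from \Cref{assump:model-t-parametric,assump:model-s-parametric} and marginalize the latent $T$. The per-case numerator is $\int_b^{\min(e,s)}\kappa\exp(rt)h_{\alpha,\beta}(s-t)\,dt$; substituting $u=s-t$ and using the exponential-tilting identity $\exp(-ru)h_{\alpha,\beta}(u)=(\beta/(\beta+r))^{\alpha}h_{\alpha,\beta+r}(u)$ collapses this to $\exp(rs)(\beta/(\beta+r))^{\alpha}[H_{\alpha,\beta+r}(s-b)-H_{\alpha,\beta+r}((s-e)_+)]$, where the positive part records that the lower limit of the $u$-integral is $\max(s-e,0)$.

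The denominator is $\int_b^{e}\kappa\exp(rt)H_{\alpha,\beta}(M-t)\,dt$; after substituting $x=M-t$ this becomes $\kappa\exp(rM)\int_{(M-e)_+}^{M-b}\exp(-rx)H_{\alpha,\beta}(x)\,dx$. The key step is to recognize $Z_r$ as $r$ times an antiderivative: differentiating $Z_r(x)=(\beta/(\beta+r))^{\alpha}H_{\alpha,\beta+r}(x)-\exp(-rx)H_{\alpha,\beta}(x)$ and invoking the same tilting identity gives $Z_r'(x)=r\exp(-rx)H_{\alpha,\beta}(x)$, so the denominator equals $(\kappa\exp(rM)/r)[Z_r(M-b)-Z_r((M-e)_+)]$. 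Taking the ratio cancels $\kappa$ and $\exp(rM)$, leaves the factors $r$ and $(\beta/(\beta+r))^{\alpha}$ together with $\exp\{r(s-M)\}$, and the product over $i=1,\dots,n$ reproduces \eqref{eq:conditional-likelihood-truncation} for $r\neq 0$. For $r=0$ I would redo the two integrals with $g\equiv\kappa$; the denominator integral $\int_b^{e}\kappa H_{\alpha,\beta}(M-t)\,dt$ is handled by the companion identity $\int_0^x H_{\alpha,\beta}(y)\,dy = xH_{\alpha,\beta}(x)-(\alpha/\beta)H_{\alpha+1,\beta}(x)$ (integration by parts with $y\,h_{\alpha,\beta}(y)=(\alpha/\beta)h_{\alpha+1,\beta}(y)$), which is exactly $Z_0$.

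The crux of the whole argument is this pair of Gamma identities — the tilting relation $\exp(-rx)h_{\alpha,\beta}(x)\propto h_{\alpha,\beta+r}(x)$ and the resulting closed form for $Z_r$ — since these are what let both the $T$-marginalization and the normalizing integral close up in terms of incomplete-Gamma CDFs; everything else is bookkeeping of integration limits. The secondary obstacles are the edge cases: carrying the positive parts $(s-e)_+$ and $(M-e)_+$ correctly (they encode $H_{\alpha,\cdot}(x)=0$ and $Z_r(x)=0$ for $x\le 0$), checking that the $r=0$ formula is the genuine $r\to 0$ limit of the $r\neq0$ one, and confirming that the normalizing constant of the conditional density is $s$-independent, so that it may be pulled outside the $T$-integral when assembling the likelihood.
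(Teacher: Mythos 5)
Your proposal is correct and takes essentially the same route as the paper's proof: renormalize the untruncated conditional density by $\P(\mathcal{D}, S \le M \mid B=b, E=e)$, collapse the inner $s$-integral to $H(M-t)$, marginalize $T$ via the Gamma exponential-tilting identity (the paper's \Cref{lem:gamma-integral}), and evaluate the normalizing integral in terms of $Z_r$ --- the paper does this last step by integration by parts, while you verify directly that $Z_r'(x) = r\exp(-rx)H_{\alpha,\beta}(x)$, which is the same computation in different packaging. One small point worth noting: your denominator $\int_b^{e} g(t)H(M-t)\,dt$ (with the convention $H(x)=0$ for $x\le 0$) coincides with the paper's $\int_b^{\min(e,M)} g(t)H(M-t)\,dt$, and both differ from the $\max(e,s)$ upper limit printed in the proposition statement, which appears to be a typo since a normalizing constant cannot depend on $s$.
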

It is straightforward to show that $L_{\text{cond,trunc}}(r,\alpha,\beta;M)$ reduces
to the conditional likelihood $L_{\text{cond}}(r,\alpha,\beta)$
without the right truncation in
\eqref{eq:observed-data-conditional-likelihood-parametric} when $M \to
\infty$.

We demonstrate the two kinds of biases in the estimation of the
incubation period using a retrospective experiment. In this
experiment, we assumed the incubation period follows a
Gamma distribution and estimated its median and the 95\% quantile by
maximizing one of the following three likelihood functions:
\begin{enumerate}
\item {\bfseries Adjusted for nothing:} This is the likelihood
  function in \citet{reich2009estimating} that is equal to our
  $L_{\text{cond}}(0,\alpha,\beta)$ by setting $r=0$.
\item {\bfseries Adjusted for growth:} This is our conditional
  likelihood function $L_{\text{cond}}(r,\alpha,\beta)$.
\item {\bfseries Adjusted for both growth and right-truncation:} This is our
  conditional likelihood $L_{\text{cond,trunc}}(r,\alpha,\beta;M)$
  with adjustment for sample selection due to the right-truncation $S
  \le M$.
\end{enumerate}
For each day from January 23 to
February 18, we estimated the incubation distribution using
Wuhan-exported cases in our dataset confirmed by that day. For the
third method, we choose $M$ to be a week prior to the truncation date
for confirmation, as most Wuhan-exported cases were confirmed within a
week of symptom onset. \Cref{fig:ascertainment} shows the estimated
medians and 95\% quantiles of the incubation period of COVID-19, with
pointwise confidence intervals in the plot computed using the
basic nonparametric bootstrap with 1000 resamples
\citep{efron1994introduction}.

\begin{figure}[hpt]
  \centering
  \includegraphics[width = \textwidth]{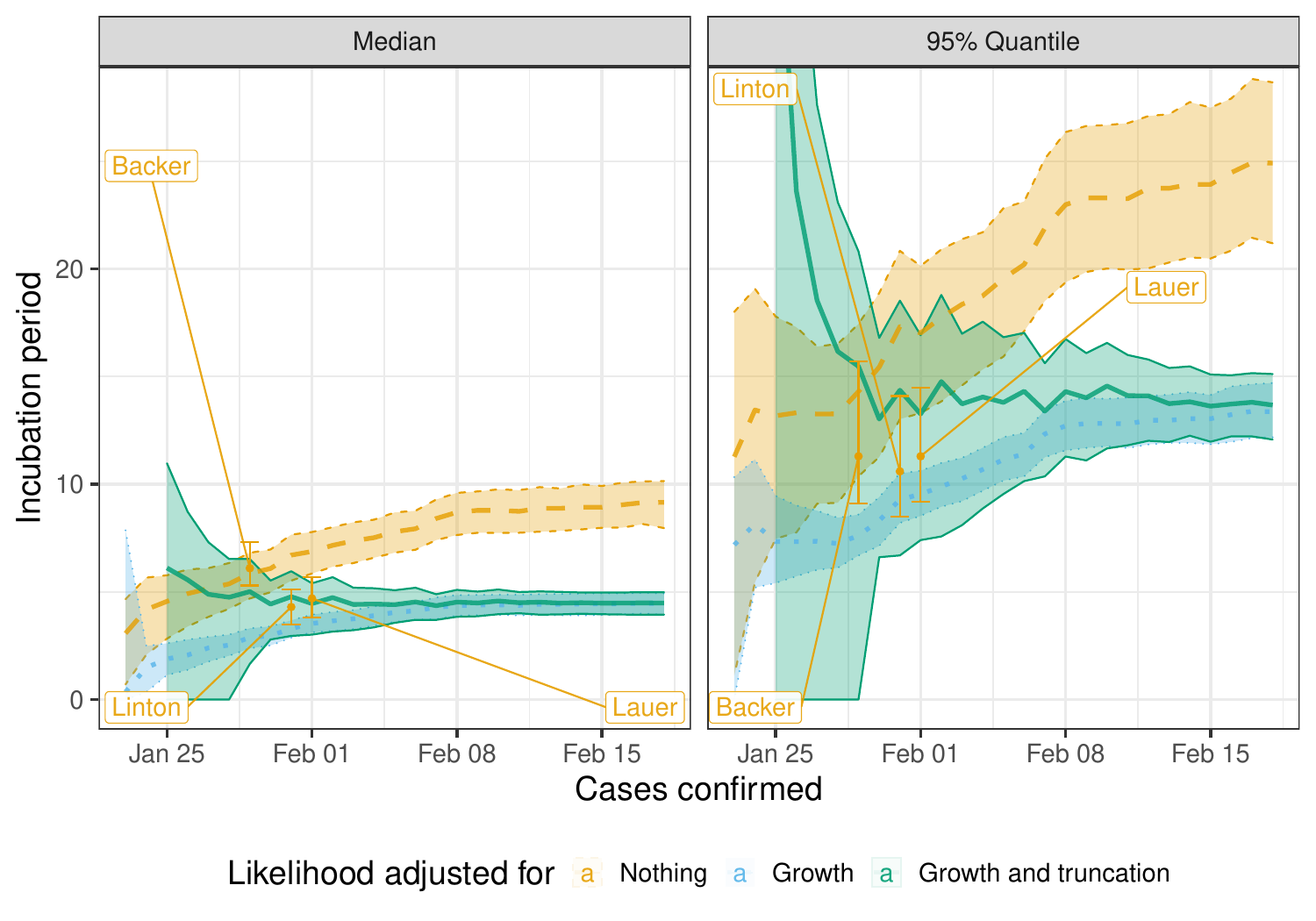}
  \caption[Illustration of bias in early studies for the incubation
  period.]{An illustration of two kinds of biases in the estimation of
    the incubation period of COVID-19. The curves (region) in the plot are
    maximum likelihood estimators (and bootstrap confidence intervals)
    using three likelihood functions and cases confirmed by each
    day. Likelihood functions used in this experiment are:
    $L_{\text{cond}}(0, \alpha,\beta)$ (dashed orange),
    $L_{\text{cond}}(r,\alpha,\beta)$ (dotted blue), and
    $L_{\text{cond,trunc}}(r,\alpha,\beta;M)$ (solid green). Results
    of some previous studies \citep{backer2020incubation,lauer2020incubation,linton2020incubation}
    (essentially using our conditional likelihood with $r$ set to 0 on
    different datasets) are also shown in this
    plot.\\
    \small (Note: \citet{lauer2020incubation}
    did not report an estimated 95\% quantile of the incubation period. Here we imputed
    it based on the reported median and 97.5\% quantile, assuming a
    Gamma distribution for the incubation period. Although
    \citet{lauer2020incubation} used COVID-19 cases confirmed as late as
    late February, only $4$ out of their 181 cases were confirmed in
    February. In this Figure it is thus treated as using cases confirmed
    up till February 1.)}
  \label{fig:ascertainment}
\end{figure}

The bias due to not accounting for right-truncation can be clearly
visualized from the dotted blue curves in \Cref{fig:ascertainment}. Had
we fitted our conditional likelihood function
$L_{\text{cond}}(r,\alpha,\beta)$ using cases confirmed by January 31
(265 cases),
the estimated median incubation period would be 3.5 days and the 95\%
quantile would be 9.5 days. In comparison, when the entire dataset is
used, the estimated median and 95\% quantile are 4.6 days and 13.5
days (\Cref{tab:results-parametric}).

The over-estimation due to ignoring the epidemic growth is even more
dramatic. Had we fitted
the incubation period using the likelihood
function in \citet{reich2009estimating} (the same as setting $r=0$ in
our conditional likelihood) to all the cases in our dataset (387 cases), the
estimated median incubation period would be 9.2 days and the 95\%
quantile would be a whopping 24.9 days!

The truncation-corrected conditional likelihood
$L_{\text{cond,trunc}}(r,\alpha,\beta;M)$ derived in
\Cref{prop:likelihood-truncation} successfully corrected for the
right-truncation bias. The estimated median and 95\% quantile of the
incubation using $L_{\text{cond,trunc}}(r,\alpha,\beta;M)$ were
roughly unbiased starting from the end of January. Had we fitted this
likelihood using all cases confirmed by January 31 and having shown symptoms
a week prior (220 cases), the estimated median incubation period would
be 4.8 days (95\% CI: 3.0 to 6.0) and the estimated 95\% quantile
would be 14.4 days (95\% CI: 6.7 to 18.5). These estimates are less
precise than the estimates obtained using the entire dataset
(\Cref{tab:results-parametric}), but they correctly reflect the
uncertainty due to the right-truncation. In contrast, using the
wrong likelihood functions not only results in biased point estimates
but also narrow and misleading confidence intervals.

Because the right-truncation bias and epidemic growth bias are
towards opposite directions, coincidentally
they were almost ``balanced out'' in the previous studies. As a
consequence, their estimates were not drastically different from
ours. To be fair in our criticism, the previous studies did
acknowledge that under-ascertainment of mild cases could bias their
analyses. \citet{backer2020incubation} mentioned the over-estimation
due to ignoring epidemic growth in their
discussion. \citet{linton2020incubation} attempted to use a formula to
correct for right-truncation which bears some similarity to
\eqref{eq:conditional-density-truncation}, which resulted in slightly
longer estimates of the incubation period. However,
\citet{linton2020incubation} did not give any justification to the
formula and we could not derive it from our generative
model. In any case, our experiments in \Cref{fig:ascertainment}
clearly show that these early estimates of the incubation period
(especially their tail estimates) are unreliable to guide health
policies.


\section{Nonparametric inference}
\label{sec:bayes}


So far we have used parametric assumptions (e.g.\ Gamma-distributed
incubation period) to explicitly derive likelihood functions for the
observed data. To assess the robustness of our results, we next
relax some of these parametric assumptions. In particular, we will
model the distribution of the incubation period nonparametrically so
\aoasins{there are fewer restrictions on}
the tail probabilities\aoasdel{are not determined by any parametric
form}. \aoasins{By a nonparametric model, we mean the inference is not
based on any particular parametric family of probability
distributions. This is slightly different from the standard notion of a
``Bayesian nonparametric model'' that involves modeling an
infinite-dimensional parameter (e.g. a function on the real line).}
Because analytic forms of the sample selection probabilities
$\P((b,e,t,s) \in \mathcal{D}  \mid b, e)$ and $\P((b,e,t,s) \in
\mathcal{D})$ are generally unavailable, we will put prior
distributions on the model parameters and use Markov Chain Monte Carlo
(MCMC) to compute their posterior distributions.

\subsection{Time discretization}
\label{sec:time-discretization}

We start by discretizing all the time variables in the model, which are measured
in days. This will simplify the Bayesian computation.
Instead of working with continuous time $(B,E,T,S) \in
\mathcal{P}$, we use the discretization:
\[
  B^{*} = \lceil B \rceil, E^{*} = \lceil E \rceil, T^{*} = \lceil T
  \rceil, S^{*} = \lceil S  \rceil,
\]
where $\lceil \cdot \rceil$ is the ceiling function ($\lceil x \rceil$
is the smallest integer larger than $x$). The support of
$(B^{*},E^{*},T^{*},S^{*})$ is then $\mathcal{P}$, the set of all $4$-tuples
of integers and $\infty$. The general continuous distributions in
\Cref{assump:model-t,assump:model-s} can be modified accordingly:
\[
  \begin{split}
    \P(T^{*} = t^{*} \mid B^{*} = b^{*}, E^{*} = e^{*}) &=
    \begin{cases}
      g^{*}(t^{*}),& \text{if}~b^{*} \le t^{*} \le e^{*},\\
      1 - \sum_{t^{*} = b^{*}}^{e^{*}} g^{*}(t^{*}), & \text{if}~t^{*} = \infty;
    \end{cases} \\
    \P(S^{*} = s^{*} \mid B^{*} = b^{*}, E^{*} = e^{*}, T^{*} = t^{*})
    &= \begin{cases}
      \nu \cdot h^{*}(s^{*}-t^{*}),& \text{if}~t^{*} \le s^{*} < \infty,\\
      1 - \nu, & \text{if}~s^{*} = \infty,
    \end{cases} \\
  \end{split}
\]
where $g^{*}(\cdot)$ satisfies $\sum_{x^{*}=0}^{L} g^{*}(x^{*}) \le 1$
and $h^{*}(\cdot)$ is a probability mass function on nonnegative
integers: $\sum_{x^{*}=0}^{\infty} h^{*}(x^{*}) = 1$.

\subsection{Relaxing the parametric assumptions}
\label{sec:relax-param-assumpt}

Our parametric assumptions
(\Cref{assump:model-t-parametric,assump:model-s-parametric,assump:model-b-parametric,assump:model-e-parametric})
on the distribution of $(B,E,T,S)$ can be translated to the following
assumptions on $(B^{*},E^{*},T^{*},S^{*})$ after discretization:
\[
  g^{*}(t^{*}) \approx g_{\kappa,r}(t^{*}) = \kappa \exp(r t^{*}),~h^{*}(t^{*} - s^{*}) \approx
  h_{\alpha,\beta}(t^{*} - s^{*}),
\]
\[
  \P(B^{*} = b^{*}) =
  \begin{cases}
    (1 - \pi),& \text{for}~b^{*} = 0,\\
    \pi / L,& \text{for}~b^{*} = 1,\dotsc,L,
  \end{cases}
\]
and
\[
  \P(E^{*} = e^{*} \mid B^{*} = b^{*}) =
  \begin{cases}
    \lambda_{b^{*}},& \text{for}~b^{*} \le e^{*} \le L, \\
    1 - (L - b^{*}+1) \lambda_{b^{*}},& \text{for}~ e^{*} = \infty.
  \end{cases}
\]
where $\lambda_0 = \lambda_W$ and $\lambda_1 = \cdots = \lambda_L =
\lambda_V$.

In the nonparametric model we consider the following relaxations:
\begin{enumerate}
\item \textbf{Nonparametric distribution for the incubation period:}
  Besides putting a prior to encourage smoothness and log-concavity, we
  do not put any parametric restrictions on the distribution of the
  incubation period.
\item \textbf{Two-stage exponential growth:} Human-to-human
  transmissibility of COVID-19 is first confirmed to the public in the
  evening of January 20. We modify the exponential growth model to
  allow for a different growth exponent after January 20:
  \[
    g^{*}(t^{*}) = g_{\kappa,r_1,r_2}^{*}(t^{*}) =
    \begin{cases}
      \kappa \exp(r_1 t^{*}) &\text{if}~t\le L_1, \\
      \kappa \exp(r_2 (t^{*} - L_1) + r_1^{*} L_1) & \text{if}~L_1 < t \le L_2,
    \end{cases}
  \]
  where $L_1 = 51$ (January 20) and $L_2 = L = 54$ (January 23). The
  simple exponential growth model is a special case of this model with
  both $L_1$ and $L_2$ set to $L$.
\item \textbf{Geometric distribution for $E^{*} \mid B^{*}$:} As a
  sensitivity analysis to our assumption that $E^{*} \mid B^{*}$ is
  uniformly distributed between $B^{*}$ and $L$, this relaxation
  assumes a geometric distribution for $E^{*} \mid B^{*}$:
  \[
    \P(E^{*} = e^{*} \mid
    E^{*} \ge e^{*}, B^{*}) =
    \begin{cases}
      \eta_{B^{*},1} & \text{ if } e^{*} < L_\text{chunyun}, \\
      \eta_{B^{*},2} & \text{ if } e^{*} \ge L_\text{chunyun},
    \end{cases}
  \]
  where $L_\text{chunyun}=41$ corresponds to January 10, the start of
  the Chinese New Year travel season known as ``chunyun''. We assume
  $\eta_{0,i}=\eta_{W,i}$ and $\eta_{1,i}=\dots=\eta_{L,i}=\eta_{V,i}$,
  for $i=1,2$.

\item \revv{\textbf{Gender-specific and age-specific incubation periods}: To assess
    whether the distribution of incubation period varies with gender,
    we use different densities, $h^{*}_M(\cdot)$ for men and $h^{*}_F(\cdot)$ for
    women. Like in (i), we put no parametric restrictions on these
    distributions apart from the same prior that encourages smoothness
    and log-concavity. Similarly, we can use different densities for
    different age groups. To avoid fitting incubation period with too
    few observations, we only consider two age groups: above 50 years
    old and below 50 years old. Notice that the same exponential
    growth model for $g$ is used for different gender or age groups,
    as we expect the growth of the chance of infection is the same for
    all strata.}
\end{enumerate}

Under these different modeling assumptions, likelihood functions for
the parameters can be computed in the same way as in
\Cref{sec:sample-selection}, with integrals replaced by finite
sums. We omit the details here.

\subsection{Prior distributions and details of the implementation}
\label{sec:prior-distr-deta}

To simplify the computation, we assume the incubation period of
COVID-19 is less than 30 days. It is common to use a unimodal
distribution with a smooth density function to model the incubation
period. We use the following prior distribution on $h^{*}(\cdot)$ to
encourage smoothness and log-concavity:
\begin{equation}
  \label{eq:prior-h}
  \begin{split}
    &\pi\big(h^{*}(0),\dotsc,h^{*}(29)\big) \propto \Big(
    \prod_{x^{*}=0}^{29} h^{*}(x^{*})^{\mu \cdot h_0(x^{*}) - 1}
    \Big) \\
    &\qquad \qquad\times \exp\Big\{\sum_{x^{*}=1}^{28} \big( 2\log h^{*}(x^{*}) -
    \log h^{*}(x^{*} - 1) - \log h^{*}(x^{*}+1)  \big)_{-}
    \Big\}.
  \end{split}
\end{equation}
where $(\cdot)_-$ is the negative part function. The first part of
the right hand side of \eqref{eq:prior-h} is proportional to the
density of a Dirichlet distribution with concentration parameters
$\{ \mu \cdot h_0(0),\dotsc,\mu \cdot h_0(29) \}$. We choose $h_0(\cdot)$
to be a discretization of $\text{Gamma}(9, 1.5)$, whose tail probability of
$\ge 14$ days is less than $0.01$. The second part of the right hand side of
\eqref{eq:prior-h} is an exponential tilt which penalizes lack of
log-concavity.

We put uninformative priors on other parameters in the model:
\[
  r_1 \sim \text{Exp}(1),~r_2 \sim \text{N}(0, 4),\kappa \sim
  \text{Unif}(0,1),~\lambda_W,\lambda_V \sim \text{Unif}(0, 1/L).
\]
Note that $r_2$ is allowed to be negative (exponential decrease after
January 20). For the model with a geometric distribution for $E^{*}\mid B^{*}$,
we put Unif$(0,1)$ priors on $\eta_{W,1},\eta_{W,2},\eta_{V,1},\eta_{V,2}$.

A random walk Metropolis--Hastings algorithm targeting the posterior distribution
of $h^{*}(\cdot)$ and $r_1$, $r_2$ was implemented using the TensorFlow Probability
library in Python~\citep{dillon2017tensorflow}. We simulated chains of $80,000$ steps,
discarding a burn-in period of 50\%. The convergence of the sampler was assessed
by simulating 8 parallel Markov chains with initial values overdispersed with respect to the
target distribution, and computing the potential of scale reduction
factor \citep{gelman1992inference} for $r_1$, the mean of the
incubation period distribution, and the probability of an incubation
period of 14 days or more. In every case, the statistic was
confidently below $1.1$.


\subsection{Results of Bayesian nonparametric inference}
\label{sec:bayes_result}

\subsubsection*{Aggregated results}

\Cref{tab:results-bayesian} reports the results of the Bayesian
nonparametric inference in $7$ different scenarios. Overall, they are
not too dissimilar to the results of the parametric model in
\Cref{tab:results-parametric}. Without restricting the tail to follow
that of a Gamma distribution, the estimated tail probabilities are
slightly higher than those in \Cref{tab:results-parametric}. The
posterior mean for $\P(S^{*}-T^{*} \ge 14~\text{days})$ exceeds $0.03$
in all scenarios, even when we exclude the cases
confirmed in Xinyang who seemed to have longer incubation periods in
\Cref{tab:results-parametric}.

Moreover, prior and posterior
distributions of $\P(S^{*}-T^{*} \ge 14~\text{days})$ show a large
discrepancy (\Cref{fig:results-bayesian-2}), indicating that the
posterior estimates of the tail probabilities are driven by
the data instead of the prior. \revv{Employing the two-stage epidemic
  growth model suggests that the epidemic growth may have slowed down
  after January 20, but this more flexible model did not alter the
  estimated doubling time and incubation period distribution
  substantially.} Taken together, our nonparametric models suggest
that the probability of an incubation period of at least 14 days
(among symptomatic cases) may be about $5\%$.


\begin{figure}[ht]
  \centering
  \begin{subfigure}[b]{0.65\textwidth}
    \includegraphics[width=0.9\textwidth]{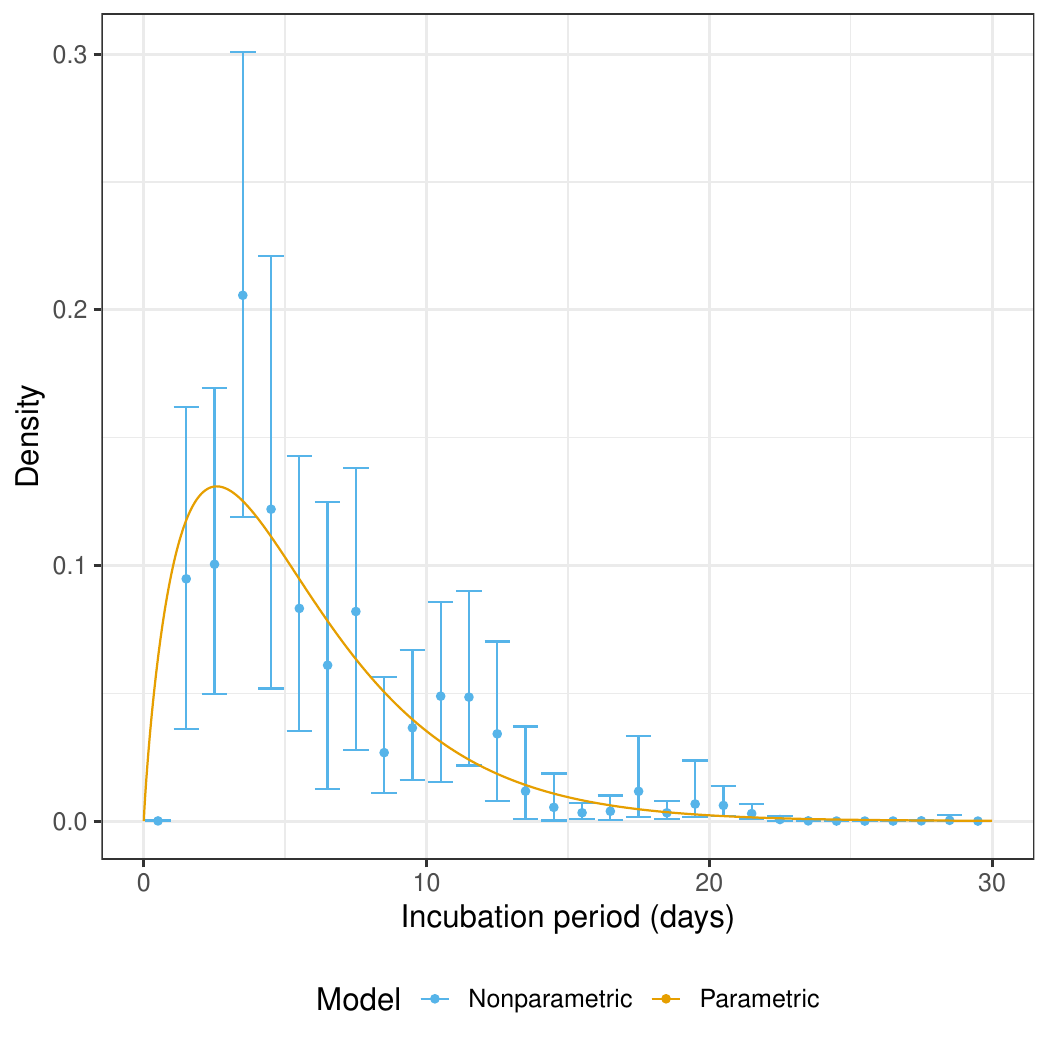}
    \caption{A comparison of the estimated parametric and
      nonparametric incubation period distributions. Orange curve:
      Gamma distribution with median = 4.5 and 95\% quantile =
      13.4. Blue error bars: Posterior mean and 95\% credible
      interval, computed in the model with a two-stage exponential growth,
      and geometric $E^*\mid B^*$, with $\mu=10$.}
    \label{fig:results-bayesian-1}
  \end{subfigure} \\[4mm]
  \begin{subfigure}[b]{0.9\textwidth}
    \includegraphics[width=\textwidth]{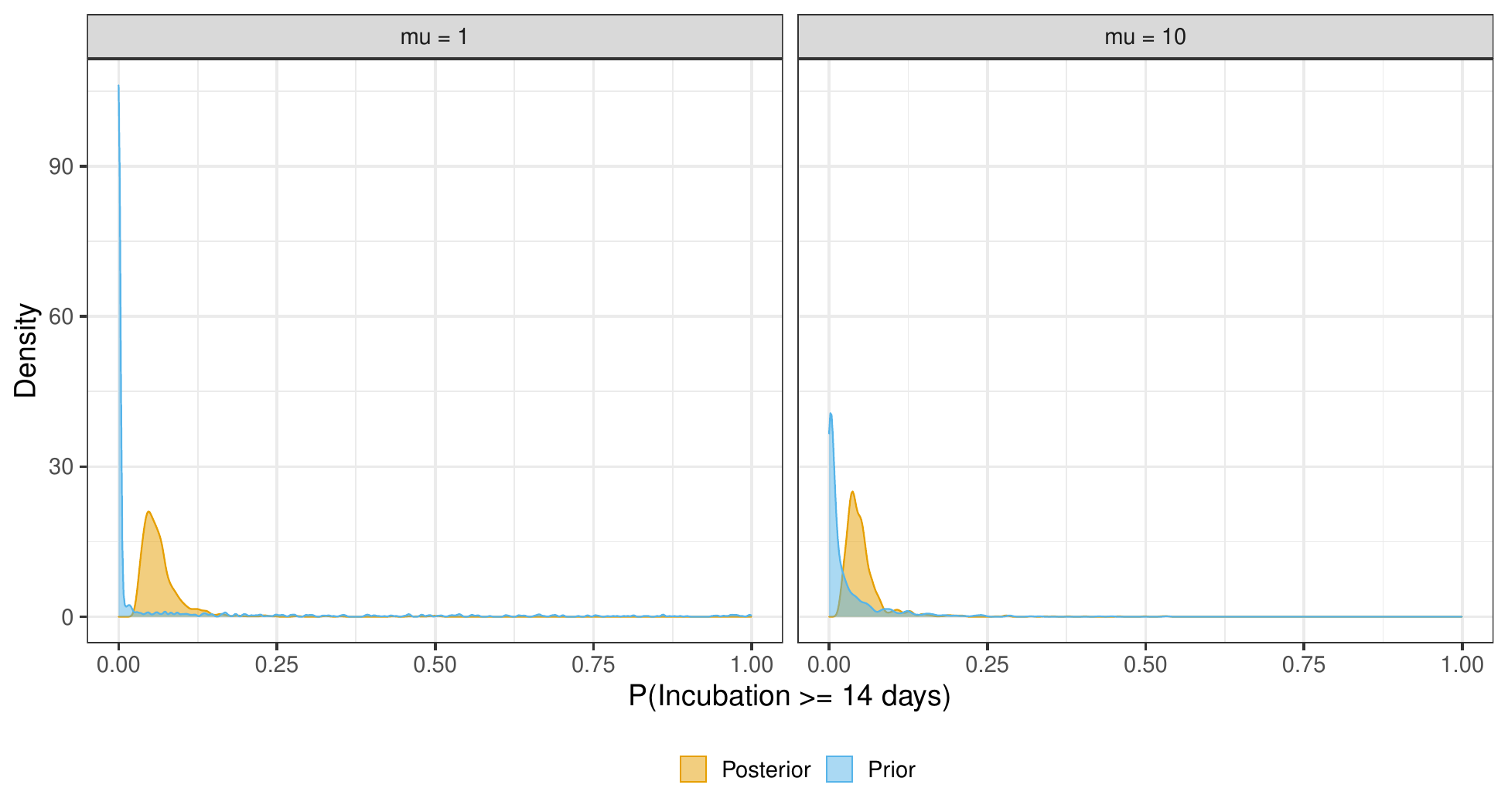}
    \caption{A comparison of the prior and posterior
      $\P(\text{Incubation} \ge 14~\text{days})$ in the first two simulation
      scenarios of \Cref{tab:results-bayesian}. Two panels
      corresponds to two choices for the prior distribution \eqref{eq:prior-h} of
      $h^{*}(\cdot)$.}
    \label{fig:results-bayesian-2}
  \end{subfigure}
  \caption{An illustration of the nonparametric Bayesian fit to the
    incubation period distribution.}
  \label{fig:results-bayesian}
\end{figure}

\revv{
\subsubsection*{Gender-specific and age-specific incubation periods}

\begin{figure}[tp]
  \centering
  \begin{subfigure}[b]{\textwidth}
    \includegraphics[width =
    0.45\textwidth]{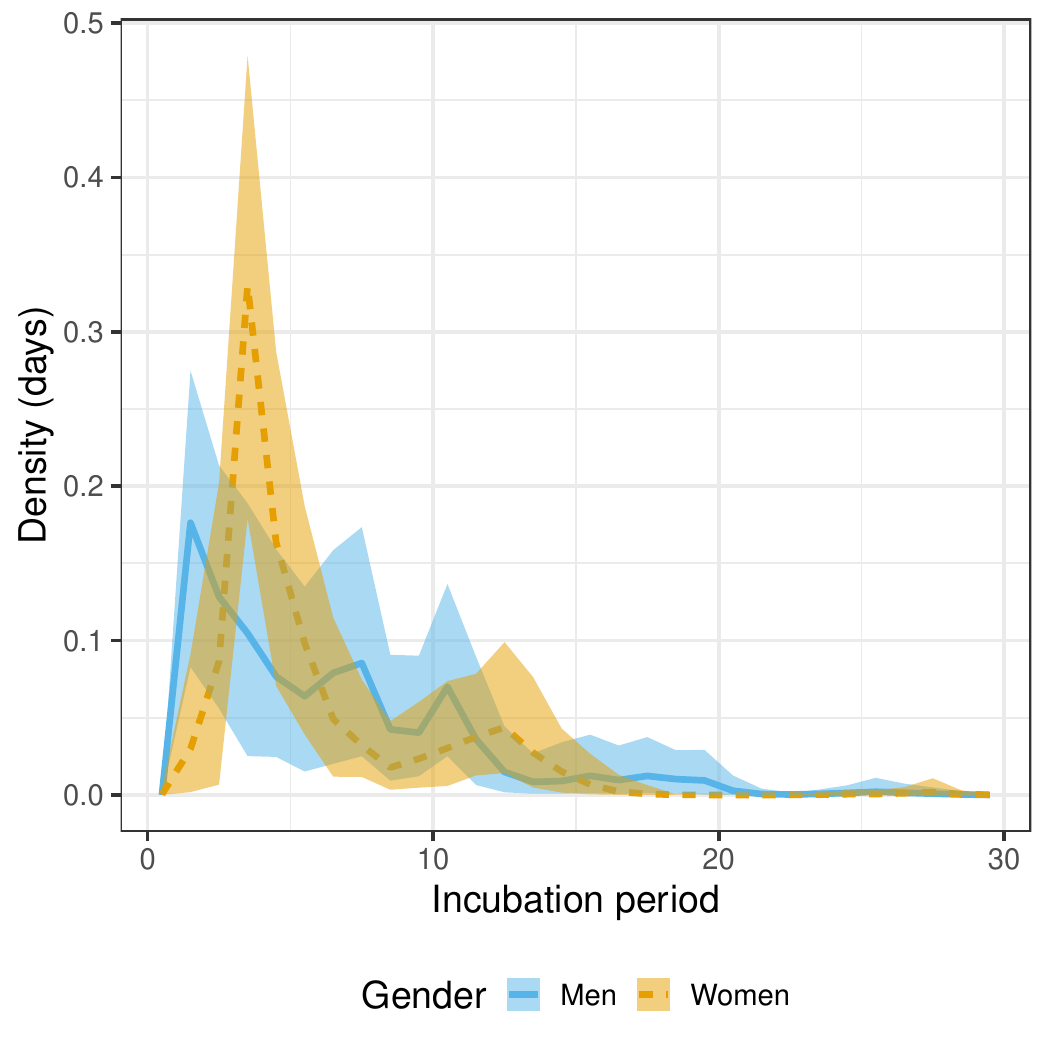} \qquad
    \includegraphics[width =
    0.45\textwidth]{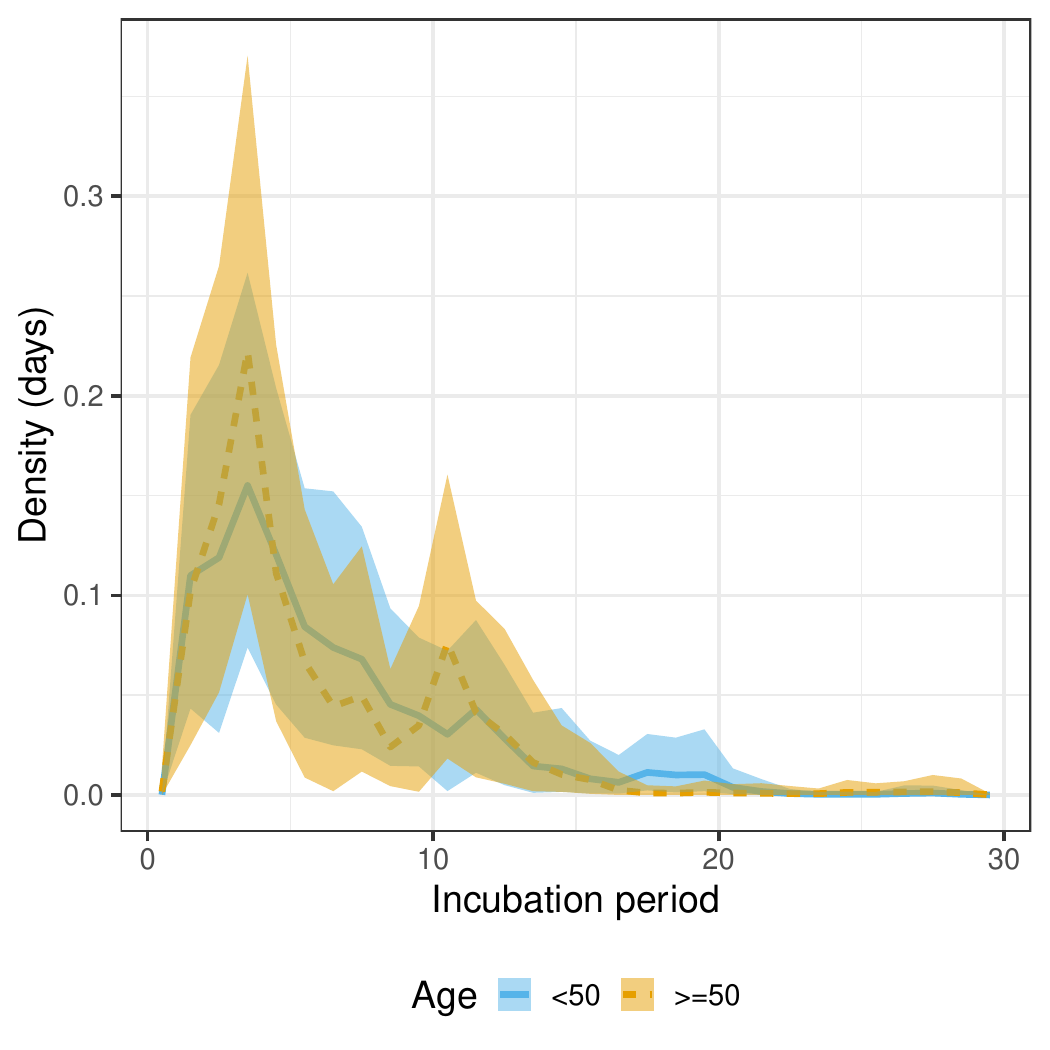}
    \caption{Gender-specific (left) and age-specific (right)
      distributions of the incubation period.}
    \label{fig:specific-incubation}
  \end{subfigure}
  \begin{subfigure}[b]{\textwidth}
    \includegraphics[width =
    0.45\textwidth]{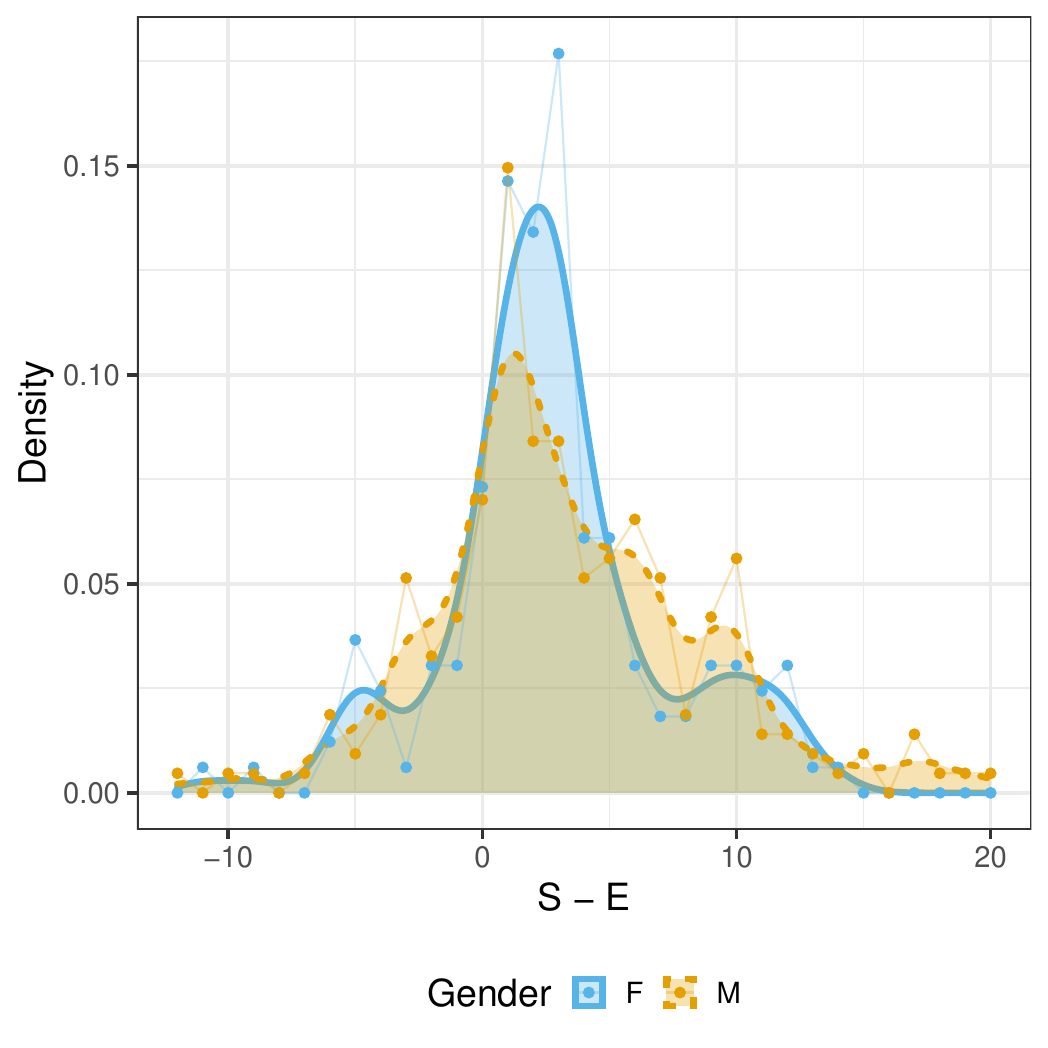} \qquad
    \includegraphics[width =
    0.45\textwidth]{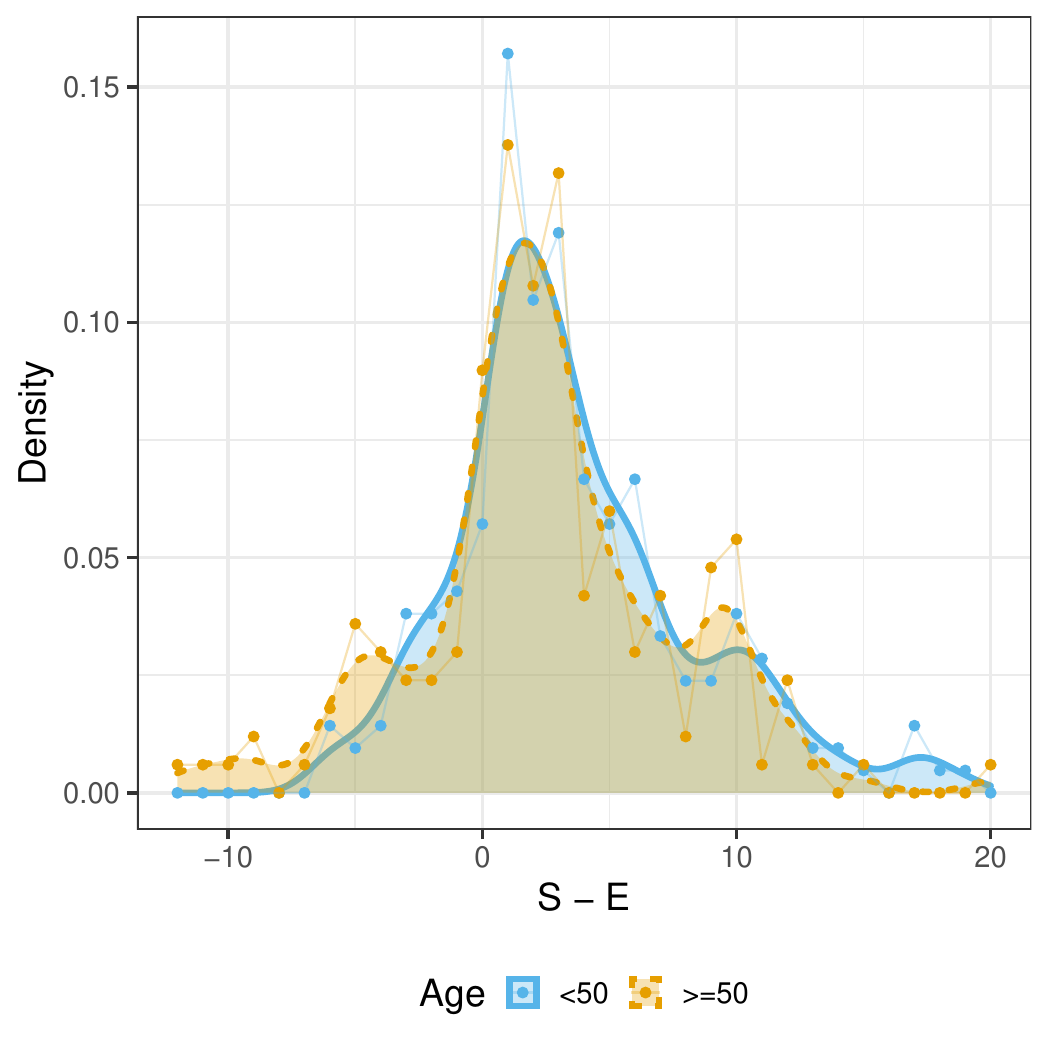}
    \caption{Gender-specific (left) and age-specific (right)
      distributions of $S - E$ (days from leaving Wuhan to symptom
      onset). Dots represent the exact proportions in the dataset and
      the curves are kernel density estimates with bandwidth set to 1
      day. }
    \label{fig:specific-s-e}
  \end{subfigure}
  \caption{Results of the Bayesian nonparametric inference after
    stratification by gender and age. Blue solid curves are for
    men and cases younger than 50 and orange dashed curves are for
    women and cases older than 50.}
  \label{fig:stratification}
\end{figure}

The gender-specific and age-specific estimates of the incubation
period can be found in
\Cref{tab:specific-incubation} and \Cref{fig:specific-incubation}. The
estimated distributions of incubation periods for men and women are notably
different, with the distribution of men peaking earlier and having a
heavier tail than women. In particular, men seem to be much more
likely than women to develop symptoms within two days of infection. A
related phenomenon can be directly seen
from the raw distribution of $S - E$ (days from leaving Wuhan to symptom
onset; can be negative if a person showed symptoms during the stay in
Wuhan) for the cases exported from Wuhan (\Cref{fig:specific-s-e}). The
difference $S-E$ appears to be more spread out in men. A
nonparametric Ansari-Bradley test for the dispersion
gives a $p$-value $\approx 0.0025$, while the Wilcoxon signed-rank
test for the location gives a $p$-value $\approx 0.48$
\citep{sidak1999theory}.

The difference of incubation period distributions for older and
younger cases seem less pronounced in
\Cref{fig:specific-incubation}. Their distributions of $S -E$ also
appear to be quite similar in \Cref{fig:specific-s-e} (Ansari-Bradley
test for dispersion: $p$-value $\approx 0.56$; Wilcoxon test for
location: $p$-value $\approx 0.1$).

}

\begin{landscape}
    \begin{table}[ht] \small 
      \centering
      \caption{Results of the nonparametric Bayesian
        inference where we do not impose a parametric form for the
        distribution of the incubation period. As sensitivity
        analyses, we also vary the study sample, model for the
        epidemic growth, distribution of $E^{*}$ given $B^{*}$, and
        the hyperprior parameter $\mu$. Numbers reported in the table
        are posterior means and 95\%  credible intervals (in
        brackets).}
      \label{tab:results-bayesian}
      \begin{tabular}{ll|ccccccc}
        \toprule
        \multicolumn{2}{c|}{Sample} & All & All &
                                                  Shenzhen
        & Wuhan residents & All except Xinyang & All & All\\
        \multicolumn{2}{c|}{Growth} & $r_1$ & $r_1$ & $r_1$ & $r_1$ &
                                                                      $r_1$
                                               & $r_1,r_2$ & $r_1,r_2$ \\
        \multicolumn{2}{c|}{$E^{*} \mid B^{*}$} & Uniform & Uniform & Uniform & Uniform
                          & Uniform & Uniform & Geometric \\
        \multicolumn{2}{c|}{$\mu$} &
                                                                    $\mu
                                                                    = 1$ &
                                                                           $\mu
                                                                           = 10$
                                                &
                                                  $\mu
                                                  =
                                                  1$
        & $\mu = 1$
                          & $\mu = 10$ & $\mu =
                                            1$ &
                                                 $\mu
                                                 = 1$ \\
        \midrule
        \multicolumn{2}{c|}{Doubling days for $r_1$} &
 2.4 (2.2--2.6) & 2.4 (2.2--2.6) & 2.5 (2.2--2.9) & 2.3 (2.1--2.6) & 2.4 (2.2--2.6) & 2.2 (2.0--2.4) & 2.2 (1.9--2.4)
 \\

\multicolumn{2}{c|}{$r_2$ (Growth in Jan.\ 21--23)}  &
-- & -- & -- & -- & -- & .01 (-.23--.22) & -.12 (-.40--.12)
 \\
        \midrule
        \multirow{5}{*}{\begin{tabular}{c} Incubation \\
                          period \end{tabular}} & Mean &
5.5 (5.0--5.9) & 5.4 (5.0--5.9) & 4.4 (3.7--5.1) & 5.6 (5.0--6.1) & 5.0 (4.6--5.5) & 5.6 (5.1--6.1) & 5.6 (5.2--6.1)
 \\

    & $\P(\ge 7)$ &
.31 (.25--.38) & .31 (.25--.37) & .22 (.13--.31) & .30 (.22--.38) & .26 (.19--.32) & .05 (.02--.08) & .06 (.03--.08)
 \\

    & $\P(\ge 10)$ &
.19 (.14--.24) & .18 (.14--.22) & .10 (.05--.17) & .21 (.14--.28) & .14 (.10--.19) & .19 (.14--.23) & .19 (.15--.24)
 \\

    & $\P(\ge 14)$ &
.05 (.02--.08) & .04 (.02--.07) & .03 (.01--.06) & .04 (.01--.07) & .03 (.01--.06) & .05 (.02--.08) & .06 (.03--.08)
 \\

    & $\P(\ge 21)$ &
.00 (.00--.01) & .00 (.00--.00) & .01 (.00--.03) & .01 (.00--.02) & .00 (.00--.01) & .00 (.00--.01) & .00 (.00--.01)
 \\
        \bottomrule
      \end{tabular}
    \end{table}

  \begin{table}[h] \small 
  \centering
  \caption{Estimated distributions of incubation period in
    different subgroups.}
  \label{tab:specific-incubation}
  \begin{tabular}{l|ccc|ccc}
    \toprule
    Subgroup & Men & Women & Difference & Age $<$ 50 & Age $\ge$ 50 &
                                                                       Difference \\
    \midrule
Mean &
5.8 (5.1--6.6) & 5.3 (4.7--5.9) & 0.5 (-0.4--1.5) & 5.4 (4.7--6.1) & 5.8 (5.2--6.5) & -0.5 (-1.4--0.4)
 \\

$\P(\ge 2)$ &
.82 (.73--.92) & .97 (.91--1.00) & -.15 (-.25---.04) & .90 (.78--.97) & .89 (.81--.96) & .01 (-.11--.13)
 \\

$\P(\ge 4)$ &
.59 (.50--.69) & .55 (.42--.69) & .04 (-.13--.20) & .53 (.41--.65) & .62 (.50--.72) & -.09 (-.25--.07)
 \\

$\P(\ge 7)$ &
.37 (.29--.46) & .24 (.17--.32) & .13 (.01--.24) & .31 (.22--.40) & .34 (.26--.42) & -.03 (-.14--.09)
 \\

$\P(\ge 10)$ &
.20 (.13--.28) & .17 (.11--.23) & .03 (-.06--.14) & .20 (.13--.28) & .18 (.13--.24) & .01 (-.08--.11)
 \\

$\P(\ge 14)$ &
.07 (.04--.12) & .03 (.01--.07) & .04 (-.01--.09) & .03 (.01--.07) & .07 (.04--.11) & -.03 (-.08--.01)
 \\

$\P(\ge 21)$ &
.01 (.00--.03) & .00 (.00--.02) & .00 (-.01--.02) & .01 (.00--.03) & .01 (.00--.01) & .00 (-.01--.02)
 \\
    \bottomrule
  \end{tabular}
\end{table}


\end{landscape}




\section{Discussion}
\label{sec:discussion}

In this article, we have proposed the generative BETS model for
four key epidemiological events: beginning of exposure, end of
exposure, time of transmission, and time of symptom onset. Under
parametric models, we have derived the sample inclusion probability for
exported cases and used it to correct for selection bias in the
likelihood functions. Across different sub-samples and modeling
assumptions, the initial epidemic doubling time for COVID-19 in Wuhan
is consistently estimated to be between $2$ to $2.5$ days. Our
nonparametric Bayesian analysis suggests that the parametric fit
likely under-estimated the tail of
the incubation period, and among all the COVID-19 patients who
develop symptoms, about 5\% of them could develop the symptoms at least 14
days after contracting the pathogenic virus. \revv{Gender-specific
  analysis shows that men may have a more variable incubation
  period than women. In particular, more men appear to show symptoms within two
  days of infection, which could be related to the men's higher death
  rate across the world \citep{globalhealth5050}.} We hope the
generality of our model makes it extensible in further studies of
the current pandemic and other outbreaks in the future.



A key epidemiological parameter we decided not to study in this
article is the basic reproduction number, commonly denoted by
$R_0$. Intuitively, $R_0$ is the expected number of secondary
infections produced by a
typical case in a population where everyone is susceptible. In early
outbreak analysis, $R_0$ can be estimated from the epidemic growth
exponent $r$ by $R_0 = 1/M(-r)$ \citep{wallinga2007generation}, where
$M(\cdot)$ is the moment generating function for the distribution of
the serial interval (time between successive cases in a chain of
transmission). Several studies have attempted to estimate the serial
interval of COVID-19 in Wuhan by using observed pairs of infector-infectees
\citep{li2020early,nishiura2020serial,du2020serial}. The reported
point estimate of the mean serial interval ranging from $4.0$
\citep{du2020serial} to $7.5$ days \citep{li2020early}. However, for most
COVID-19 cases it seems impossible to ascertain the infector, so these
early estimates of the serial interval could be severely biased by
sample selection just like the early estimates of epidemic growth and
incubation period as seen in \Cref{sec:why-prev-analys}.

Our findings in this article should be viewed together with
the limitations of our methodology. \revv{First of all, symptom onset time were usually
  reported by patients, who could be under social pressure to
  report a later symptom onset (for example, so they did not travel
  when showing symptoms). This can make estimated incubation longer
  than the truth.} Second, although the contact tracing for
travelers from Wuhan was intensive in the locations included in our
dataset, some degree of under-ascertainment of Wuhan-exported
cases is perhaps inevitable. \revv{If patients who showed symptoms
  earlier were less likely to be ascertained, our analysis may have
  over-estimated the speed of the epidemic growth.} \revv{Third, the discernment
of Wuhan-exported cases in \Cref{sec:wuhan-exported-cases} is not
perfect; for example,} there is
ambiguity about where some COVID-19 cases were infected if they both
had stayed in Wuhan and were exposed to other confirmed cases after
their stay. Another potential limitation is the core assumptions that
the disease transmission and progression are independent of
traveling. This assumption is necessary to
extend the conclusions from a ``shadow'' of the epidemic
(Wuhan-exported cases) to the center of the outbreak, but it can be
violated if, for example, some people canceled travel plans due to
feeling sick. Finally, it is possible that the population of travelers
is not representative of the general population in a meaningful
way.

\revv{Nevertheless, these limitations are perhaps minor compared to
the selection bias identified in this article. Several authors have
warned about selection bias and other statistical issues in COVID-19
studies
\citep{lipsitch2020estimating,lipsitch2020use,niehus2020quantifying,gelman2020blog,bergstrom2020interview}. By
constructing a generative model and deriving the likelihood functions
from first principles, we gave a quantitative assessment of the selection bias in several
high-impact studies. We found that the biases were indeed startling. This
highlights the lesson that data
quality and methodical consideration of selection bias are often much more
important than data quantity and specific models. This is especially
important in high-stakes decisions like the ones for the COVID-19
pandemic. In a world where data science is playing an ever-larger role
in policy making, ignoring selection bias could become the most
costly of bets.}


\section*{Acknowledgement}
We thank Cindy Chen, Yang Chen, Yunjin Choi, Hera He, Michael Levy, Marc
Lipsitch, James Robins, Andrew Rosenfeld, Dylan Small, Yachong Yang,
and Zilu Zhou for their helpful suggestions. We thank citizens living
in the first author's hometown, Wuhan, whose enduring adherence to
the travel quarantine not only saved many lives but also made our
analysis possible.


\bibliographystyle{imsart-nameyear}
\bibliography{ref}

\clearpage
\appendix
\section{Technical proofs}
\label{sec:technical-proofs}

\subsection{Derivation of \Cref{lem:selection-probability-general}}
\label{sec:proof-crefpr-prob}

Using \cref{eq:model-t,eq:model-s}, it is straightforward to show that
\begin{align*}
  &\P((B,E,T,S) \in \mathcal{D} \mid B = b, E = e) \\
  =& \P(b \le T \le e, T \le S < \infty \mid B = b, E = e) \\
  =& \int_{t \in (b,e)} f_T(t \mid b,e)
     \int_{s \in (t,\infty)} f_S(s \mid b,e,t)  \, ds \, dt \\
  =& \int_{t \in (b,e)} f_T(t \mid b,e)
     \Big\{\int_{s \in (t,\infty)} \nu \cdot h(s - t) \, ds\Big\} \, dt
  \\
  =&  \int_{t \in (b,e)} f_T(t \mid b,e)
     \cdot \nu \, dt \\
  =&  \nu [G(e) - G(b)].
\end{align*}

\subsection{Derivation of \Cref{lem:conditional-density-parametric}}
\label{sec:deriv-crefl-dens}

By \Cref{assump:model-t-parametric}, $G_{\kappa,r}(t) = \int_{-\infty}^t g_{\kappa,r}(s) \, ds =
(\kappa/r) \exp(r t)$. Thus for $b > 0$, we have
\begin{equation} \label{eq:integral-e}
  \begin{split}
    &\P((b,E,T,S) \in \mathcal{D} \mid B = b) \\
    =&\nu \int_{e \in (b, L)} f_E(e \mid b) \, \big[G_{\kappa,r}(e) - G_{\kappa,r}(b) \big]
    \, de \\
    =& \nu \int_b^L \lambda_V \, (\kappa / r) \, \{\exp(r e)
    - \exp(r b) \} \, de \\
    =& \frac{\lambda_V \kappa \nu}{r} \Big[\frac{1}{r} \big(\exp(rL) -
    \exp(r b)\big) - (L - b) \exp(rb)\Big] \\
    =& \frac{\lambda_V \kappa \nu}{r^2} \exp(rL) - \frac{\lambda_V
      \kappa \nu}{r}
    (r^{-1} + L - b) \exp(rb) \\
    =& \frac{\lambda_V \kappa \nu}{r^2} \exp(rL) \Big[1 - (1+r(L-b)) \exp(-r
    (L-b))\Big].
  \end{split}
\end{equation}
For $b = 0$, we can replace $\lambda_V$ in the above equation by $\lambda_W$.

The idea is that, if $rL$ is much larger than $1$ (in our preliminary
analysis $rL \approx 0.25 \times 54 = 13.5$), then
\[
  \text{Right hand side of}~\eqref{eq:integral-e} \approx \frac{\nu \lambda_W
    \kappa}{r^2} \exp(rL)~\text{when}~b = 0.
\]
Using this approximation, we obtain
\begin{align*}
    &\P((B,E,T,S) \in \mathcal{D}) \\
    =&\int_{0 \le b < L} \P((b,E,T,S) \in \mathcal{D} \mid B = b) \,
    f_B(b) \, db \\
    =& \P(B = 0) \cdot \P((b,E,T,S) \in \mathcal{D} \mid B = 0) + \int_{0 < b < L} \P((b,E,T,S) \in \mathcal{D} \mid B = b) \,
    f_B(b) \, db \\
    \approx& \frac{(1-\pi) \lambda_W \kappa \nu}{r^2} \exp(rL) + \int_{0 < b < L} \P((b,E,T,S) \in \mathcal{D} \mid B = b) \,
    f_B(b) \, db \\
    =& \frac{(1-\pi) \lambda_W \kappa \nu}{r^2} \exp(rL) + \pi \int_0^L
    \frac{1}{L} \frac{\lambda_V \kappa \nu}{r^2} \exp(rL) \Big[1 - (1+r(L-b)) \exp(-r
    (L-b))\Big] \, db \\
    =& \frac{(1-\pi) \lambda_W \kappa \nu}{r^2} \exp(rL) + \frac{\pi
      \lambda_V \kappa \nu}{r^2} \exp(rL) - \frac{\pi}{L} \frac{\lambda_V \kappa \nu}{r^2} \exp(rL) \underbrace{\int_0^L
      \Big[(1+r(L-b)) \exp(-r (L-b))\Big]}_{A_1} \, db \\
    \approx& \frac{\kappa \exp(rL) \nu}{r^2} \Big[(1-\pi) \lambda_W + \pi
    \lambda_V (1 - 2/(rL)) \Big].
\end{align*}
In the last step we used the approximation $e^{rL} \gg 1 + rL$:
\[
  A_1 = \int_0^L (1 + r x) \exp(-r x) \, dx
  =
  - \frac{\exp(-rx) (rx + 2)}{r} \Big|_{x=0}^{x=L}
  = \frac{2}{r} - \frac{\exp(-rL)(rL+2)}{r}
  \approx \frac{2}{r}.
\]

Therefore, the density is given by
\[
  \begin{split}
    f(b,e,t,s \mid \mathcal{D}) \approx& \frac{[(1 - \pi) \lambda_W 1_{\{b = 0\}} +
      (\pi/L) \lambda_V 1_{\{b > 0\}}] \cdot \kappa
      \exp(rt) \cdot \nu h(s-t)}{r^{-2}\kappa \exp(rL) \nu \Big[(1-\pi) \lambda_W + \pi
      \lambda_V (1 - 2/(rL)) \Big]} \\
    =& r^2 \cdot \frac{[(1 - \pi) \lambda_W 1_{\{b = 0\}} +
      (\pi/L) \lambda_V 1_{\{b > 0\}}] \cdot \exp(rt)}{\Big[(1-\pi) \lambda_W + \pi
      \lambda_V (1 - 2/(rL)) \Big] \cdot \exp(rL)} \cdot h(s-t) \\
    =& r^2 \cdot \frac{[ 1_{\{b = 0\}} +
      (\rho/L) 1_{\{b > 0\}}] \cdot \exp(rt)}{\Big[1 + \rho (1 - 2/(rL)) \Big] \cdot \exp(rL)} \cdot h(s-t), \\
  \end{split}
\]
where $\rho = (\lambda_V / \lambda_W) \pi / (1-\pi)$.

\subsection{Derivation of
  \Cref{prop:observed-data-likelihood-parametric}}
\label{sec:deriv-crefpr-data}

The following Lemma is useful to marginalize over $T$ when the incubation
period follows a $\text{Gamma}(\alpha,\beta)$ distribution:
\begin{lemma} \label{lem:gamma-integral}
  For any $r > 0$ and $b \le e \le s$,
  \[
    \int_b^{\min(s,e)} \exp(rt) \,h_{\alpha,\beta}(s-t) \, dt =
    \Big(\frac{\beta}{\beta+r}\Big)^{\alpha} \exp(r s) \,
    \big[H_{\alpha,\beta+r}(s-b) - H_{\alpha,\beta+r}((s-e)_+)\big].
  \]
\end{lemma}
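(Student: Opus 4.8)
The plan is to reduce the integral to a difference of two Gamma CDF values by a single substitution $u = s - t$, which reinterprets the factor $\exp(rt)$ as an exponential tilt of the Gamma density and thereby rescales the rate from $\beta$ to $\beta + r$.

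First I would substitute $u = s - t$, so that $t = s - u$, $dt = -du$, and $\exp(rt) = \exp(rs)\exp(-ru)$. The limits require care: as $t$ increases from $b$ to $\min(s,e)$, the new variable $u$ decreases from $s - b$ down to $s - \min(s,e)$, and the elementary identity $s - \min(s,e) = \max(s - e, 0) = (s-e)_+$ turns this lower endpoint into the positive part appearing in the statement. The range is nonempty because $b \le e$ and $b \le s$ give $b \le \min(s,e)$. Reversing orientation to absorb the sign from $dt = -du$, the integral becomes $\exp(rs)\int_{(s-e)_+}^{s-b} \exp(-ru)\, h_{\alpha,\beta}(u)\, du$.

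The crux is the identity $\exp(-ru)\, h_{\alpha,\beta}(u) = \big(\frac{\beta}{\beta+r}\big)^{\alpha} h_{\alpha,\beta+r}(u)$, which I would verify by writing both sides explicitly: the left side equals $\frac{\beta^{\alpha}}{\Gamma(\alpha)} u^{\alpha-1} \exp\{-(\beta+r)u\}$, and multiplying and dividing by $(\beta+r)^{\alpha}$ recovers the $\mathrm{Gamma}(\alpha, \beta+r)$ density times the constant $\big(\frac{\beta}{\beta+r}\big)^{\alpha}$. This is exactly where the hypothesis $r > 0$ enters, since it guarantees $\beta + r > 0$ and hence that $h_{\alpha,\beta+r}$ is a genuine density with CDF $H_{\alpha,\beta+r}$.

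Finally I would pull the constant $\big(\frac{\beta}{\beta+r}\big)^{\alpha}\exp(rs)$ outside and integrate the density over $[(s-e)_+,\, s-b]$, which by the definition of the CDF equals $H_{\alpha,\beta+r}(s-b) - H_{\alpha,\beta+r}((s-e)_+)$, yielding the claimed expression. No step poses a genuine obstacle; the only point demanding attention is the bookkeeping of the upper limit $\min(s,e)$ through the substitution, so that it lands correctly as the lower endpoint $(s-e)_+$ and transparently covers both the regime $s \ge e$ (symptom onset after the end of exposure) and $s < e$ (symptom onset during the Wuhan stay), as required when the lemma is applied in \Cref{prop:observed-data-likelihood-parametric}.
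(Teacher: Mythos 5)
Your proposal is correct and follows essentially the same route as the paper's proof: both rest on the exponential-tilting identity $\exp(rt)\,h_{\alpha,\beta}(s-t) = \big(\tfrac{\beta}{\beta+r}\big)^{\alpha}\exp(rs)\,h_{\alpha,\beta+r}(s-t)$ and then read off the integral of the $\mathrm{Gamma}(\alpha,\beta+r)$ density as a difference of CDF values, with the limits $s-b$ and $(s-e)_+$ handled identically. The only cosmetic difference is that you perform the substitution $u = s-t$ before tilting, whereas the paper tilts inside the integral and changes variables implicitly at the last step.
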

\begin{proof}
  By a change of variables,
  \[
    \begin{split}
      &\int_b^{\min(s,e)} \exp(rt) \, h_{\alpha,\beta}(s-t) \, dt \\
      =&
      \int_b^{\min(s,e)} \exp(rt) \, \frac{\beta^{\alpha}}{\Gamma(\alpha)} (s-t)^{\alpha-1}
      \exp\{-\beta (s-t)\} \, dt \\
      =&  \Big(\frac{\beta}{\beta+r}\Big)^{\alpha} \exp(r s)
      \int_b^{\min(s,e)} \frac{(\beta+r)^{\alpha}}{\Gamma(\alpha)} (s-t)^{\alpha-1}
      \exp\{-(\beta+r)(s-t)\} \, dt \\
      =&  \Big(\frac{\beta}{\beta+r}\Big)^{\alpha} \exp(r s) \,
      \big[H_{\alpha,\beta+r}(s-b) - H_{\alpha,\beta+r}((s-e)_+)\big].
    \end{split}
  \]
\end{proof}

The time of contraction $T$ is not observed. Should it be observed,
the full data unconditional likelihood is given by
\begin{align*}
    &L_{\text{uncond}}\big(\rho, r, h(\cdot); \bm T\big)\\
    =&\prod_{i=1}^n f(B_i,E_i,T_i,S_i \mid
    (B_i,E_i,T_i,S_i) \in \mathcal{D}) \\
    \approx& r^{2n} \cdot \prod_{i=1}^n\frac{ 1_{\{B_i = 0\}} +
      (\rho/L) 1_{\{B_i > 0\}} }{1 + \rho (1 - 2/(rL))} \cdot
    \prod_{i=1}^n \underbrace{ 1_{\{B_i \le T_i \le
        \min(E_i,S_i)\}} \cdot
      \exp(r(T_i - L)) \cdot h(S_i - T_i)}_{A_{2,i}}.
\end{align*}

If we assume $h(\cdot)$ is the density of a Gamma distribution:
\[
  h(x) = h_{\alpha,\beta}(x) = \frac{\beta^{\alpha}}{\Gamma(\alpha)} x^{\alpha - 1}
  \exp(-\beta x) \quad (x > 0 ) ,
\]
then we can marginalize over $T_i$ using \Cref{lem:gamma-integral}:
\[
    \int A_{2,i} \, d T_i = \exp\big\{r (S_i - L)\big\}
    \Big(\frac{\beta}{\beta+r}\Big)^{\alpha} \cdot
    \big[H_{\alpha,\beta+r}(S_i - B_i) -
    H_{\alpha,\beta+r}((S_i-E_i)_{+}) \big],
\]

In conclusion, the unconditional observed data likelihood is given by
\[
  \begin{split}
    L_{\text{uncond}}(\rho,r,\alpha,\beta)
    \approx r^{2n} \Big(\frac{\beta}{\beta+r}\Big)^{n\alpha} \cdot &\prod_{i=1}^n\bigg\{\frac{1_{\{B_i = 0\}} +
      (\rho/L) 1_{\{B_i > 0\}} }{1 + \rho (1 - 2/(rL))} \\
    & \times \exp\big\{r (S_i - L)\big\}
    \big[H_{\alpha,\beta+r}(S_i - B_i) -
    H_{\alpha,\beta+r}((S_i-E_i)_{+}) \big]\bigg\}.
  \end{split}
\]
The conditional observed data likelihood can be derived in the same
way. Details are omitted.

\subsection{Derivation of \Cref{prop:marginal-s}}
\label{sec:deriv-crefpr-s}

By integrating the conditional density
\eqref{eq:conditional-density-parametric} over $(b,e,s)$, the marginal
distribution of $T$ conditional on $(B,E,T,S) \in \mathcal{D}$ is
given by
\revv{
\[
  \begin{split}
  f_T(t \mid \mathcal{D}\revv{,B=0}) &\propto \int \int \int f(b,e,t,s
  \mid \mathcal{D}) \cdot 1_{\{(b,e,t,s) \in \mathcal{D}\}} \cdot 1_{\{b=0\}} \, db \, de \, ds \\
  &\approx \int_0^t \int_t^L \int_t^{\infty} r^2 \cdot \frac{1_{\{b = 0\}}  \cdot \exp(rt)}{\Big[1 + \rho (1 -
    2/(rL)) \Big] \cdot \exp(rL)} \cdot h(s-t) \, ds \, de \, db \\
  &\propto \int_t^L \int_t^{\infty} \exp(rt) \, h(s-t) \, ds \, de\\
  &= \int_t^L \exp(rt) \, de\\
  &= (L-t) \exp(rt).
  \end{split}
\]
}


\Cref{assump:model-s} says that the distribution of the symptom onset
$S$ only depends \rev{ on} the time of transmission $T$ ($S \independent
B, \rev{ E} \mid T$). Therefore the marginal distribution of $S$ in
exported Wuhan resident cases is given by convolving the distribution of $T$
with the distribution of the incubation period $S-T$:
\[
  f_S(s \mid \mathcal{D}\revv{,B=0}) = \int_0^{\min(L,s)} f_T(t \mid \mathcal{D}\revv{,B=0}) h(\revv{s-t})
  \, dt
\]
Under the parametric assumption that $S-T$ follows a Gamma
distribution (\Cref{assump:model-s-parametric}), we have,
\rev{for $s \ge L/2$,}
\[
  \begin{split}
  f_S(s \mid \mathcal{D}\revv{,B=0}) &\appropto \int_0^{\min(L,s)} (L-t) \exp(rt) \cdot
  (s-t)^{\alpha-1} \exp \{-\beta(s-t)\} \, dt \\
  &= \exp(rs) \cdot \int_0^{\min(L,s)} [(L-s) + (s-t)] (s-t)^{\alpha-1}
  \exp\{-(\beta+r)(s-t)\} \, dt \\
  &= \exp(rs) \cdot \int_{(s-L)_+}^s [(L-s) x^{\alpha - 1} +
  x^{\alpha}] \exp\{-(\beta+r) x\} \, dx \\
  &= \exp(rs) \cdot \Big\{ (L-s)
  \frac{\Gamma(\alpha)}{(\beta+r)^{\alpha}} \big[H_{\alpha,\beta+r}(s)
  - H_{\alpha,\beta+r}((s-L)_+) \big] \\
  &\qquad \qquad \qquad +
  \frac{\Gamma(\alpha+1)}{(\beta+r)^{\alpha+1}} \big[H_{\alpha+1,\beta+r}(s)
  - H_{\alpha+1,\beta+r}((s-L)_+) \big] \Big\} \\
  &\propto \exp(rs) \cdot \Big\{ (L-s) \big[H_{\alpha,\beta+r}(s)
  - H_{\alpha,\beta+r}((s-L)_+) \big] \\
  &\qquad \qquad \qquad + \frac{\alpha}{\beta+r} \big[H_{\alpha+1,\beta+r}(s)
  - H_{\alpha+1,\beta+r}((s-L)_+) \big] \Big\}. \\
  &\approx \exp(rs) \cdot \Big\{ (L-s)
  [1-H_{\alpha,\beta+r}((s-L)_+)] + \frac{\alpha}{\beta + r}[1 -
  H_{\alpha+1,\beta+r}((s-L)_+)]\Big\}.
  \end{split}
\]
\rev{
The last step uses the approximation that
\[
  1 \approx H_{\alpha+1,\beta+r}(L/2) \le H_{\alpha+1,\beta+r}(s) \le
  H_{\alpha,\beta+r}(s).
\]
We next show that this is reasonable under the
technical assumption $L > 4(\alpha+5)/(\beta+r)$ in
\Cref{prop:marginal-s}. Suppose $X \sim
\text{Gamma}(\alpha+1,\beta+r)$. The Chernoff tail bound says that
\[
  1 - H_{\alpha+1,\beta+r}(L/2) = \P(X > L/2) \le \frac{\E[\exp(c
    X)]}{\exp(c L/2)} = \frac{(1 -
    c/(\beta+r))^{-(\alpha+1)}}{\exp(cL/2)}~\text{for}~a < \beta+r.
\]
By choosing $c = (1-\exp(-1))(\beta+r) \approx 0.63 (\beta+r)$, we have
\[
  1 - H_{\alpha+1,\beta+r}(L/2) \le \frac{\exp(\alpha+1)}{\exp\{0.31
    (\beta+r)L\}} < \frac{\exp(\alpha+1)}{\exp\{0.31 \times 4
    (\alpha+5)\}} < \exp(1 - 0.31 \times 20) < 0.01.
\]
}

\subsection{Derivation of \Cref{prop:likelihood-truncation}}
\label{sec:deriv-crefpr-trunc}

Let $e_- = \min(e,M)$. Then under \Cref{assump:model-t,assump:model-s}, for
$(b,e,t,s) \in \mathcal{D}$ and $s \le M$,
\begin{equation} \label{eq:appendix-1}
  \begin{split}
    f_{T,S}(t,s \mid b,e,\mathcal{D}, S \le M) &= \frac{f_{T,S}(t,s \mid
      b,e,\mathcal{D}) }{\int \int f_{T,S}(t,s \mid
      b,e,\mathcal{D})\,ds\,dt} \\
    &= \frac{g(t) h(s-t)}{\int_b^{e_-} g(t) \int_t^M  h(s-t) \, ds \, dt} \\
    &= \frac{g(t) h(s-t)}{\int_b^{e_-} g(t) H(M-t) \, dt},
  \end{split}
\end{equation}
where $H(s) = \int_0^s h(x) \, dx$ is the distribution function of the
incubation period. Assuming $g(t) = \kappa \exp(rt)$ and using
integration by parts, for $r \neq 0$,
\[
  \begin{split}
    \int_b^{e_-} g(t) H(M-t) \, dt =& \kappa \int_{b}^{e_-} \exp(rt) H(M-t)
    \, dt \\
    =&\frac{\kappa}{r} \int_b^{e_-} H(M-t) \, d \exp(rt) \\
    =&\frac{\kappa}{r} \Big[\exp(rt) H(M-t) \Big|_{t=b}^{t=e_-} +
    \int_b^{e_-} \exp(rt) h(M-t) \, dt\Big].\\
  \end{split}
\]
By using $h(\cdot) = h_{\alpha,\beta}(\cdot)$ and using
\Cref{lem:gamma-integral}, we have
\[
  \int_b^{e_-} g(t) H_{\alpha,\beta}(M-t) \, dt = \frac{\kappa}{r}
  \Big[ \exp(rt) H_{\alpha,\beta}(M-t) - \Big(\frac{\beta}{\beta+r}\Big)^{\alpha} \exp(rM)
  H_{\alpha,\beta+r}(M-t) \Big]\bigg|_{t = b}^{t = e_-}.
\]
Now we integrate $t$ in \eqref{eq:appendix-1} from $b$ to $e_{-}$ and get
\[
  \begin{split}
    &f_S(s \mid b,e,\mathcal{D},S \le M) \\
    =&
   \frac{r \Big(\frac{\beta}{\beta+r}\Big)^{\alpha} \exp(r s) \,
    \big[H_{\alpha,\beta+r}(s-b) -
    H_{\alpha,\beta+r}((s-e)_{+})\big]}{\Big[ \exp(rt) H_{\alpha,\beta}(M-t) - \Big(\frac{\beta}{\beta+r}\Big)^{\alpha} \exp(rM)
  H_{\alpha,\beta+r}(M-t) \Big]\bigg|_{t = b}^{t = e_-}}.
  \end{split}
\]

For $r = 0$, using integration by parts,
\[
  \begin{split}
    \int_b^{e_-} g(t) H_{\alpha,\beta}(M-t) \, dt
    =& \kappa \int_{(M-e)_+}^{M-b} H_{\alpha,\beta}(x) \, dx \\
    =& \kappa \Big[ x H_{\alpha,\beta}(x)
    \Big|_{x=(M-e)_+}^{x=M-b} - \int_{(M-e)_+}^{M-b} x h_{\alpha,\beta}(x) \,
    dx \Big] \\
    =& \kappa \Big[ x H_{\alpha,\beta}(x)
    \Big|_{x=(M-e)_+}^{x=M-b} - \int_{(M-e)_+}^{M-b} x \cdot
    \frac{\beta^{\alpha}}{\Gamma(\alpha)} x^{\alpha-1} \exp(-\beta x)
    \, dx \Big] \\
    =& \kappa \Big[ x H_{\alpha,\beta}(x)
    \Big|_{x=(M-e)_+}^{x=M-b} - \frac{\alpha}{\beta} \int_{(M-e)_+}^{M-b}
    \frac{\beta^{\alpha+1}}{\Gamma(\alpha+1)} x^{\alpha} \exp(-\beta x)
    \, dx \Big] \\
    =& \kappa \Big[ x H_{\alpha,\beta}(x)
    - \frac{\alpha}{\beta} H_{\alpha+1,\beta}(x) \Big]
    \bigg|_{x=(M-e)_+}^{x=M-b}.
  \end{split}
\]
We can similarly integrate $t$ out and obtain the full data
likelihood. Details are omitted.

\end{document}